\newtheorem{theorem}{Theorem}
\newtheorem{lemma}[theorem]{Lemma}
\newtheorem{corollary}[theorem]{Corollary}
\newcommand{\CN}{\mathcal{N}}
\newcommand{\CJ}{\mathcal{J}}
\begin{document}
%
\title{Analysis of Cell Load Coupling for LTE Network Planning and Optimization}

%
%

\author[1]{Iana Siomina}
\author[1,2]{Di Yuan}
\affil[1]{{\small Ericsson Research, Ericsson AB, Sweden}}
\affil[2]{{\small Department of Science and Technology, Link{\"o}ping University, Sweden}}
\affil[ ]{\em{{\small Emails: iana.siomina@ericsson.com, diyua@itn.liu.se}}}



%
%


\markboth{Siomina and Yuan: Analysis of Cell Load Coupling for LTE
Network Planning and Optimization}{}

%



\maketitle

\begin{abstract}
  System-centric modeling and analysis are of key significance in
  planning and optimizing cellular networks. In this paper, we provide
  a mathematical analysis of performance modeling for LTE networks.
  The system model characterizes the coupling relation between the
  cell load factors, taking into account non-uniform traffic demand
  and interference between the cells with arbitrary network
  topology. Solving the model enables a network-wide performance
  evaluation in resource consumption.  We develop and prove both
  sufficient and necessary conditions for the feasibility of the
  load-coupling system, and provide results related to computational
  aspects for numerically approaching the solution.  The theoretical
  findings are accompanied with experimental results to instructively
  illustrate the application in optimizing LTE network configuration.
\end{abstract}

\begin{IEEEkeywords}
3.5G and 4G technologies, cell load coupling, network planning, optimization, system modeling
\end{IEEEkeywords}

%
\IEEEpeerreviewmaketitle

\setstretch{1.62}

\section{Introduction}
\label{sec:introduction}
%
%
%
%

Planning and optimization of LTE network deployment, such as base
station (BS) location and antenna parameter configuration, necessitate
modeling and algorithmic approaches for network-level performance
evaluation. Finding the optimal network design and configuration
amounts to solving an optimization problem of combinatorial nature.
Toward this end, system modeling admitting rapid performance
assessment in order to facilitate the selection among candidate
configuration solutions, of which the number is typically huge, is
essential. In this paper, we provide a rigorous analysis of an LTE
system performance model that works for general network topology and
explicitly accounts for non-uniform traffic demand. The performance
model that we study is referred to as the load-coupling system, to
emphasize the fact that the model characterizes the coupling relation
between the cells in their load factors. For each cell, the load
factor is defined as the amount of resource consumption in relation to
that is available in the cell. The load value grows with the cell's
traffic demand and the amount of inter-cell interference.
Intuitively, low load means that the network has more than enough
capacity to meet the demand, whilst high load indicates poor
performance in terms of congestion and potential service outage.  In
the latter case, the network design and configuration solution in
question should be revised, by reconfiguration or adding BS
infrastructure. Thus, simple means for evaluating the cell load for a
given candidate design solution is of high importance, particularly
because the evaluation may have to be conducted for a large number of
user demand and network configuration scenarios.

The load-coupling model for LTE networks takes the form of a
non-trivial system of non-linear equations. Calculating the solution
to the model, or determining solution existence, is not
straightforward. In this paper, we present contributions to
characterizing and solving the load-coupling system model. First, we
present a rigorous mathematical analysis of fundamental properties of
the system and its solution.  Second, we develop and prove a
sufficient and necessary condition for solution existence. Third, we
provide theoretical results that are important for numerically approaching the solution or delivering a bounding interval. Fourth, we instructively illustrate the application of the system model for
optimizing LTE network configuration.

The remainder of the paper is organized as follows. In Section
\ref{sec:related} we review some related works. The system model is
presented in Section \ref{sec:model}, and its fundamental properties
are discussed in Section \ref{sec:property}. In Section
\ref{sec:linear}, we present linear equation systems for the
purpose of determining solution existence.  In Section
\ref{sec:optimization}, we provide the relation between solving the
load-coupling system and convex optimization, and discuss approximate
solutions. The application of the system model and our theoretical
results to LTE network optimization is illustrated in Section
\ref{sec:experiment}, and conclusions are given in Section
\ref{sec:conclusion}.

\section{Related Works}
\label{sec:related}

Planning and performance optimization in cellular networks
form a very active line of research in wireless communications.
There are many works on UMTS network planning and
optimization. The research topics range from BS location and coverage planning
\cite{AmCaMa03,AmCaMa08,AmCaMa06,MaSc01,ZhYaAyWu07}, antenna parameter
configuration \cite{EiKoMaAcFuKoWeWe02,EiGeKoMaWe05,SiVaYu06},
to cell load balancing \cite{GaRuOl04,SiYu04}. For UMTS, the power
control mechanism that links together the cells in resource
consumption is an important aspect in performance modeling
\cite{Aein73,AmCaMa03,GrViGo92,Za92a,Za92b}. By power control, the transmit power of each link is
adjusted to meet a given signal-to-interference-and-noise ratio (SINR)
threshold. By the SINR requirement, the power expenditure of one cell
is a linear function in those of the other cells.  As a result, the
power control mechanism is represented by a system of linear
equations, which sometimes is referred to as UMTS interference
coupling
\cite{EiKoMaAcFuKoWeWe02,EiGeKoMaWe05}.
Interference coupling can be modeled for both downlink and
uplink. For network planning, the interference coupling system needs
to be solved many times for performance evaluation of different
candidate network configurations and multiple or aggregate user demand snapshots. In
\cite{MeHe01}, it is shown that, for both downlink
and uplink radio network planning, the dimension of the
power-control-based system of equations can be reduced from the number
of users in the system to the number of cells. The observation stems
from system characteristics that also form the foundation of
distributed power control mechanisms, see, e.g.,
\cite{GrViGo92,Za92b}. In
\cite{CaImMa04}, the authors provide theoretical properties of the
power-control-based system, and feasibility conditions in terms of
target data rates and QoS requirements. Motivated by the fact that
full-scale dynamic simulation is not computationally affordable for
large networks, the authors of \cite{ZhKoTiGo08} extend the UMTS
power-control system by a randomization-based procedure of service and
rate adaptation for HSUPA network planning.

In cellular network planning, the power-control
equation system is considered under given SINR threshold. Thus the
system solution and its existence are induced by the (candidate) network configuration in
question. In a more general context of wireless communications, power
control is often a means in performance optimization, that
is, the powers are optimization variables in minimizing or maximizing
objectives representing error probability, utility, QoS,
etc., that are all functions of SINR.  There is a vast amount of
theoretical analysis and algorithmic approaches for power optimization
under various (typically non-linear) objective functions, where a
gain matrix defines interference coupling
\cite{StFeWiBo10,StWiBo07,WiBoSt09,WiStBo08}.  In \cite{StWiBo07}, the
authors identify objective functions admitting a convex formulation of
power optimization, and develop a distributed
gradient-projection-based algorithm. Further developments include
algorithmic design utilizing Kuhn-Tucker condition \cite{WiBoSt09},
conditional Newton iteration yielding quadratic convergence
\cite{WiStBo08}, and model extension to include explicit
SINR-threshold constraints \cite{StFeWiBo10}.

Another line of research of power control is the characterization of
the achievable performance region under various utility functions and
interference functions. The authors of \cite{BoSt08} show the strict
convexity of the region for logarithmic functions of SINR.  In
\cite{BoNaAl11}, the authors characterize utility functions and
function transformation of power, for which the resulting power
optimization problem is convex. The investigation in \cite{BoNaSc11b}
provides conditions under which the boundary points of the region are
Pareto-optimal. In \cite{BoNaSc11a}, the authors present graph
representations of power and interference, and study the relation
between graph structure, irreducibility of the interference coupling
matrix, and the convexity of the utility region.

In contrast to the power-control model, the service requirement of
rate-control scheme in cellular networks is not SINR threshold (or a
function of SINR), but the amount of data to be served over a given
time period. Among other advantages, this approach makes it possible
to capture the effect of scheduling without the need of explicitly
modeling full details of scheduling algorithms. The rate-control-based
approach is primarily targeting, although not limited to,
non-power-controlled systems or systems with a fixed-rate traffic
demand. The approach has been less studied, but is of a high interest
for OFDMA-based networks. In general, the rate-control scheme exhibits
non-linear relations between the cell-coupling elements (in our case,
cell loads). The resulting model is therefore more complex than the
power-control model for UMTS. For power control, fundamental solution
characterizations are well-established for linear as well as more
general interference functions. For the latter, see, for example,
\cite{BoSc07}. For rate-control-based coupling systems (see
\cite{MaKo10} and Section \ref{sec:model}), a structural
difference from power control is that, in the former, one element cannot be expressed as a sum of terms, each being a function denoting the
impact of another element, and the coupling is not scale
invariant. For network planning, one known approach is
to consider an approximate linear function, obtained from
system-specific adaptive modulation and coding (AMC) parameters, to
represent the relation between date rate and SINR \cite{MaTuHuBo07},
and thereby arrive at a equation system being similar to that of UMTS.

From an engineering standpoint, LTE network optimization is becoming
increasingly important. In \cite{DaPaSkBe07}, the authors provide the
fundamental principles of LTE network operation and radio resource
allocation. Among the optimization issues, the research theme of
scheduling strategies and radio resource management (RRM) algorithms
has been extensively investigated. See, for example,
\cite{AsMo08,JaLe03,LeFaZhYa07,LePeMeXuLu09,PaHwCh07,PoKoMo06,ShAnEv05}
and the references therein.  Two major aspects considered in the
references are the balance between resource efficiency and fairness,
and quality of service awareness.  In \cite{Ek09}, the author gives a
survey of tools enabling service and subscriber differentiation.  For
cell planning, the propagation modeling, link budget consideration,
and performance parameters have been investigated in
\cite{SoSh10}.

High-level and accurate performance modeling is of high value in
planning cellular networks, as full-scale dynamic simulations are not
affordable for large planning scenarios (e.g., \cite{ZhKoTiGo08}).
The LTE system model that we analyze has been introduced by Siomina et
al.\ \cite{SiFuFo09} for studying OFDM network capacity region with
QoS consideration. The work in \cite{SiFuFo09} does not, however,
provide a general analysis of the model, and the major part of the
study relies on a simplification assuming uniform
load among cells. In the forthcoming sections, we present both analytical and
numerical results without these limitations.

Recently, the authors of \cite{MaKo10}
have presented a non-linear LTE performance model being very similar to the one
studied in the current paper.
That our performance model has been independently
proposed by others supports the modeling approach. The work in
\cite{MaKo10} provides further an approximation of load coupling via another
non-linear but simpler equation system, along with incorporating
continuous user distribution. Our study differs from \cite{MaKo10}, as
the focus of the current paper is a detailed investigation of
key properties and solution characterization of the load coupling
system.

\section{The System Model}
\label{sec:model}

Denote by $\CN = \{1, \dots, n\}$ the set of cells in a given network
design solution.  Without loss of generality, we assume that each cell
has one antenna to simplify notation.  The service area is represented
by a grid of pixels or small areas, each being characterized by
uniform signal propagation conditions. The set of pixels is denoted by
$\CJ$. The total power gain between antenna $i$ and pixel $j$ is
denoted by $g_{ij}$. We use $\CJ_i \subset \CJ$ to denote the serving
area of cell $i$. In a network planning context, both the gain matrix
as well as the cells' serving areas are determined by BS location and
antenna configuration.

For realistic network planning scenarios, the traffic demand is
irregularly distributed.  Let the user demand in pixel $j$ be denoted by
$d_j$. The demand represents the amount of data to be delivered to the users
located in pixel $j$ within the time interval under consideration.  By
defining a service-specific index, the demand parameter and the system
model can be extended to multiple types of services (see
\cite{SiFuFo09}). We will, however, consider one service type merely
for the sake of compactness.

We use $\rho_i$ to denote the level of resource consumption in
cell $i$. The entity is also referred to as cell load. In LTE systems,
the cell load can be interpreted as the expected fraction of the
time-frequency resources that are scheduled to deliver data.
The network-wise load vector, ${\boldsymbol \rho} = (\rho_1, \rho_2, \dots,
\rho_{n})^T$, plays a key role in performance modeling. In particular,
a well-designed network shall be able to meet the target demand
scenarios without overloading the cells. Hence the load vector forms a
natural performance metric in network configuration (cf.\ power
consumption in UMTS networks).  The load of a cell is a result of the
user demands in the pixels in the cell serving area, the channel
conditions, as well as the amount of interference. The last aspect
inter-connects the elements in the load vector, as the load of a cell
is determined by the SINRs and the resulting bit rates over the cell's
serving area, and these values are in turn dependent on the load
values of the other cells.  To derive the performance model, we
consider the SINR in pixel $j \in \CJ_i$ defined as follows,

\begin{equation}
\label{eq:sinr}
\gamma_j({\boldsymbol \rho}) = \frac{P_i g_{ij}}{\displaystyle \sum_{k \in \CN \setminus \{i\}} P_k g_{kj} \rho_k + \sigma^2}.
\end{equation}

In \eqref{eq:sinr}, $P_i$ is the power spectral density per minimum
resource unit in scheduling (in LTE, this corresponds to a pair of
time-consecutive resource blocks), and $\sigma^2$ is the noise
power. By \eqref{eq:sinr}, the inter-cell interference grows by the
load factor. In effect, $\rho_k$ can be interpreted as the probability
of receiving interference originating from cell $k$ on all the
sub-carriers of the resource unit. 
Let $B \log_2(1 +\gamma_j({\boldsymbol \rho}))$ be a function describing the effective bitrate per resource unit.
This formula is shown to be very
accurate for LTE downlink \cite{MoNaKoFrPoPeKoHuKu08}.  Thus to serve
demand $d_j$ in $j$, $\frac{d_j}{B \log_2(1 +
\gamma_j({\boldsymbol \rho}))}$ resource units are required.


Let $K$ denote the total number of resource units in the frequency-time domain in question,
and denote by $\rho_{ij}$ the proportion of resource consumption of cell $i$ due
to serving the users in $j \in \CJ_i$. By these definitions, we obtain
the following equation,
\begin{equation}
\label{eq:rhoij}
K \rho_{ij} = \frac{d_j} {B \log_2(1 +
\gamma_j({\boldsymbol \rho}))}.
\end{equation}

From \eqref{eq:rhoij}, it is clear that the load of a cell is a
function of the load levels of other cells.
Observing that $\rho_i = \sum_{j \in \CJ_i} \rho_{ij}$ and putting the
equations together lead to the following equation,

\vspace{-1mm}
\begin{equation}
\label{eq:rhoi}
\rho_i = \hspace{-1mm}\sum_{j \in \CJ_i} \rho_{ij}
= \hspace{-1mm}\sum_{j \in \CJ_i}  \frac{d_j} {KB \log_2(1 + \gamma_j({\boldsymbol \rho}))}
= \hspace{-1mm}\sum_{j \in \CJ_i}  \frac{d_j} {KB \log_2\left(1 +  \frac{P_i g_{ij}}{\sum_{k \in \CN \setminus \{i\}} P_k g_{kj} \rho_k + \sigma^2}\right)}~.
\end{equation}

The equation above represents the coupling relation between cells
in their resource consumption. In vector form, we have ${\boldsymbol
\rho} = {\boldsymbol f}({\boldsymbol \rho}, {\boldsymbol g}, {\boldsymbol d}, K,
B)$, where ${\boldsymbol f} = (f_1, \dots, f_i, \dots, f_n)^T$, and $f_i, i=1, \dots, n$,
represents the $\mathbb{R}_+^{n-1} \rightarrow
\mathbb{R}_+$ function as defined by \eqref{eq:rhoi}; here, $\mathbb{R}_+$ and $\mathbb{R}_+^{n-1}$ are used to denote the single- and $(n-1)$-dimension space of all real non-negative numbers, respectively.  Since in the
subsequent discussions there will be no ambiguity in the input
parameters, we use the following compact notation to denote the
non-linear equation system,
\begin{equation}
\label{eq:rho}
{\boldsymbol \rho} =  {\boldsymbol f} ({\boldsymbol \rho}).
\end{equation}

From \eqref{eq:rhoi}, three immediate observations follow.  First, for
all $i=1, \dots, n$, the load function $f_i$ is strictly increasing
in the load of other cells. Second, for non-zero $\sigma^2$, this function is
strictly positive when the load values of other cells (and thus
interference) are all zeros, i.e., $ {\boldsymbol f}(0) > 0$. Third, the function is continuous, and at
least twice differentiable for ${\boldsymbol \rho} \geq 0$.

From the network performance standpoint, the capacity is sufficient to
support the traffic demand, if equation system \eqref{eq:rho}
admits a load vector ${\boldsymbol \rho}$ with $0\leq \rho_i \leq 1, i \in
\CN$. In our analysis, however, we do not restrict ${\boldsymbol \rho}$ to be at most one, in
order to avoid any loss of generality. In addition, even if the
solution contains elements being greater than one, the values are of
significance in network planning, because they carry information of
the amount of shortage of resource in relation to the demand.

Solving \eqref{eq:rho} deals with finding a fixed point (aka invariant
point) of function ${\boldsymbol f}$ in $\mathbb{R}^n_+$, or
determines that such a point does not exist.  In the remainder of the
paper, we use $S$ as a general notation for the space of non-negative
solutions (fixed points) to systems of equations or inequalities. The
system in question is identified using subscript. Thus,
$S_{{\boldsymbol \rho} = {\boldsymbol f}({\boldsymbol
\rho})}$ denotes the solution space of \eqref{eq:rho}.  Note that, for
\eqref{eq:rho} as well as the linear equation systems to be introduced
later, only non-negative solutions are of interest.  Hence, throughout
the article, a (linear or non-linear) system is said to be feasible,
if there exists a solution for which non-negativity holds, otherwise
the system is said to be infeasible (even if a solution of negative
values exists).  The case that \eqref{eq:rho} is infeasible is denoted
by $S_{{\boldsymbol \rho} = {\boldsymbol f}({\boldsymbol \rho})} =
\emptyset$.

A useful optimization formulation in our analysis is the minimization
of the total cell load, subject to the inequality form of
\eqref{eq:rho}. The formulation is given below.
\begin{subequations}
\label{eq:optrho}
\begin{align}
\min~~ & \sum_{i \in {\cal N}} \rho_i \label{eq:optrhoobj}\\
& {\boldsymbol \rho} \geq  {\boldsymbol f}({\boldsymbol \rho}) \label{eq:optrhocons} \\
& {\boldsymbol \rho} \in  \mathbb{R}_+^n
\end{align}
\end{subequations}

For \eqref{eq:optrho}, its solution space $S_{{\boldsymbol \rho} \geq
{\boldsymbol f}({\boldsymbol \rho})}$ is also referred to as the
feasible load region.
Recall that ${\boldsymbol f}({\boldsymbol \rho})$
is strictly increasing, hence if $S_{{\boldsymbol
\rho} \geq {\boldsymbol f}({\boldsymbol \rho})} \not= \emptyset$, then
for any optimal solution to \eqref{eq:optrho},
\eqref{eq:optrhocons} holds with equality, as otherwise \eqref{eq:optrhoobj}
can be improved, contradicting that the solution is optimal.
In conclusion, any optimum of \eqref{eq:optrho} is a solution to
\eqref{eq:rho}.


We end the section by an illustration of the load-coupling
system for two cells in Figure \ref{fig:twocell}. The two cells have
symmetric parameters. In the figure, the two non-linear functions are
given by the solid lines. In the first two cases, system \eqref{eq:rho} has solutions
in ${\mathbb R}_+^2$,
though one of them represents a solution beyond the network capacity. In the last
case, the system is infeasible, as the two curves will never intersect
in the first quadrant. The straight lines with markers in the figure represent
linear equations related to \eqref{eq:rho}. Details of
these linear equations are deferred to Section \ref{sec:linear}.

\section{Fundamental Properties}
\label{sec:property}

In this section, we present and prove some fundamental properties of
the load-coupling system \eqref{eq:rho}. These theoretical results are
of key importance in the study of solution existence and
computation. For compactness, we introduce additional notation to
simplify \eqref{eq:rhoi} while keeping the essence of the
equation. Define $a_j = \frac{KB}{d_j}$,
$b_{ikj}=\frac{P_kg_{kj}}{P_ig_{ij}}$, and $c_{ij}=
\frac{\sigma^2}{P_i g_{ij}}$. These parameters contain,
respectively, the relation between the demand in pixel $j$ and
the resource in cell $i$, the inter-cell coupling in gain between cells
$k$ and $i$ in pixel $j$, and the channel quality of cell $i$ in relation to
noise in pixel $j$. The load equation \eqref{eq:rhoi} can then be written in
the following form,
\begin{equation}
\label{eq:rhois}
\rho_i = f_i({\boldsymbol \rho}), \textrm{~~where~~} f_i({\boldsymbol \rho}) = \sum_{j \in \CJ_i}  \frac{1} {a_{j} \log_2(1 +  \frac{1}{\sum_{k \in \CN \setminus \{i\}} b_{ikj} \rho_k + c_{ij}})}.
\end{equation}

The first fundamental property of \eqref{eq:rho} is how fast the load
of a cell asymptotically grows in the load of another cell.  We
formulate and prove the fact that, in the limit, the first-order
partial derivative of the load function converges to a constant.  For
any two cells $i, k$ ($i \not= k$), $\frac{\partial \rho_i}{\partial
\rho_k}$ is equal to

\vspace{-2mm}
\begin{equation}
\label{eq:partial}
\sum_{j \in \CJ_i} \ln(2) \frac{b_{ikj}}{a_j} \frac{1}{\ln^2 (1+\frac{1}{\sum_{h \in \CN \setminus \{i\}} b_{ihj} \rho_h + c_{ij}(\boldsymbol \rho)}) (\sum_{h \in \CN \setminus \{i\}} b_{ihj} \rho_h + c_{ij})^2(\boldsymbol \rho) (1 + \frac{1}{\sum_{h \in \CN \setminus \{i\}} b_{ihj} \rho_h + c_{ij}(\boldsymbol \rho)})}.
\end{equation}
\vspace{2mm}

\begin{theorem}
\label{theo:partiallimit}
$\displaystyle \lim_{\rho_k \rightarrow \infty} {\frac{\partial f_i}{\partial
\rho_k}} = \displaystyle \sum_{j \in \CJ_i} \ln(2) \frac{b_{ikj}}{a_j}$
\end{theorem}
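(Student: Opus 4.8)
The plan is to pass to the limit termwise in the finite sum \eqref{eq:partial}. First I would abbreviate the interference-plus-noise denominator that recurs throughout \eqref{eq:partial} by $q_{ij}(\boldsymbol\rho) = \sum_{h \in \CN\setminus\{i\}} b_{ihj}\rho_h + c_{ij}$, so that the summand indexed by $j \in \CJ_i$ becomes
$$\ln(2)\,\frac{b_{ikj}}{a_j}\cdot \frac{1}{\ln^2\!\bigl(1+\tfrac{1}{q_{ij}}\bigr)\,q_{ij}^{\,2}\,\bigl(1+\tfrac{1}{q_{ij}}\bigr)}.$$
Since $b_{ikj} = P_k g_{kj}/(P_i g_{ij}) > 0$ for every $j \in \CJ_i$ (powers and gains being strictly positive) while all other loads $\rho_h$ with $h \neq i,k$ are held fixed, letting $\rho_k \to \infty$ drives $q_{ij} \to \infty$ for each $j \in \CJ_i$.

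The core of the argument is then the elementary limit of the scalar factor $\ln^2(1+1/q)\,q^{2}\,(1+1/q)$ as $q \to \infty$. Substituting $t = 1/q \to 0^+$ rewrites it as $\bigl(\ln(1+t)/t\bigr)^{2}(1+t)$, and the standard limit $\lim_{t\to 0}\ln(1+t)/t = 1$ (equivalently $\ln(1+t) = t + O(t^2)$), combined with continuity of squaring and $\lim_{t\to0}(1+t)=1$, shows this factor tends to $1$. Hence the $j$-th summand of \eqref{eq:partial} converges to $\ln(2)\,b_{ikj}/a_j$.

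Finally, because $\CJ_i$ is a finite index set the limit of the sum equals the sum of the limits, which yields $\lim_{\rho_k\to\infty}\partial f_i/\partial\rho_k = \sum_{j\in\CJ_i}\ln(2)\,b_{ikj}/a_j$, the claimed identity. I do not foresee a genuine obstacle: the only points deserving explicit mention are that $q_{ij}\to\infty$ holds simultaneously for every $j \in \CJ_i$ (which is precisely what legitimizes treating the limit term by term) and that the limit may be interchanged with a finite summation; both are immediate. Should full rigor be wanted for the asymptotics $\ln(1+t)\sim t$, one application of l'H\^opital's rule or the Taylor expansion of $\ln(1+t)$ at $t=0$ closes the gap.
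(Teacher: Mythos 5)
Your proposal is correct and follows essentially the same route as the paper's proof: both pass to the limit termwise over the finite set $\CJ_i$ and reduce each summand to the elementary fact that $u\ln\bigl(1+\tfrac{1}{u}\bigr)\to 1$ as $u\to\infty$, which the paper phrases via $\lim_{u\to\infty}(1+\tfrac{1}{u})^u=e$ and you phrase via $\lim_{t\to 0^+}\ln(1+t)/t=1$ after the substitution $t=1/q$. These are the same computation in different notation, so there is nothing to add.
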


\begin{proof}
Consider the component for pixel $j$ in the sum in \eqref{eq:partial},
and ignore the constant multiplier
$\ln(2)\frac{b_{ikj}}{a_{j}}$. Letting $u = \sum_{h \in \CN \setminus
\{i\}} b_{ihj} \rho_h + c_{ij}$, \eqref{eq:partial} can be written as
the following expression.
\begin{eqnarray}
\frac{1}{u^2 (1+\frac{1}{u}) \ln(1+\frac{1}{u}) \ln(1+\frac{1}{u})} & = & \frac{1}{\ln(1+\frac{1}{u})^u \ln(1+\frac{1}{u})^u +
 \ln(1+\frac{1}{u})^u \ln(1+\frac{1}{u})} \nonumber
\end{eqnarray}

The theorem follows then from the facts that $\lim_{u \rightarrow
\infty} (1+\frac{1}{u})^u = e$ and $u$ is linear in $\rho_k$.
\end{proof}

By Theorem \ref{theo:partiallimit}, the load of a cell increases
linearly in the load of another cell in the limit, i.e., the function
converges to a line in the high-load region. Moreover, the slope of
the line is strictly positive.

The next fundamental property is concavity.  The examples in Figure
\ref{fig:twocell} indicate that the load of a cell is a strictly
concave function in the other cell's load. We show that this is
generally true.

\begin{theorem}
\label{theo:concave}
For any cell $i \in \CN$, $f_i$ is strictly concave for $(\rho_i, \dots,
\rho_{i-1}, \rho_{i+1}, \dots, \rho_n) \in R^{n-1}_{+}$.
\end{theorem}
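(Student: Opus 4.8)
The plan is to reduce the claim to a one-variable statement and then establish strict concavity of each summand in $f_i$ as a function of a single variable $\rho_k$, since concavity in a single direction at a time (together with additive separability of the interference term inside each pixel summand) will give joint concavity. Concretely, fix a pixel $j\in\CJ_i$ and consider the function $\phi_j(\boldsymbol\rho) = \bigl(a_j\log_2(1+\tfrac{1}{u_j})\bigr)^{-1}$ where $u_j = \sum_{h\in\CN\setminus\{i\}} b_{ihj}\rho_h + c_{ij}$ is an affine function of $(\rho_h)_{h\neq i}$ with nonnegative coefficients. The key reduction is that $u_j$ is affine, so it suffices to show that the scalar function $g(u) = 1/\log_2(1+1/u)$ is strictly concave and increasing for $u>0$: composing a concave increasing function with an affine map preserves concavity, and a sum of concave functions is concave; strictness then follows because at least one $b_{ikj}$-coefficient is strictly positive in the direction of interest. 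So the whole theorem collapses to a one-dimensional calculus fact about $g$.

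The substantive step is therefore to prove $g''(u)<0$ for $u>0$, where $g(u)=\ln 2 \cdot \bigl(\ln(1+1/u)\bigr)^{-1}$. I would compute $g'(u)$ and $g''(u)$ directly. Writing $L(u)=\ln(1+1/u)$, we have $L'(u) = -\frac{1}{u^2+u}<0$, so $g'(u) = -\ln 2\cdot L'(u)/L(u)^2 = \ln 2 \cdot \frac{1}{(u^2+u)L(u)^2} > 0$, which also re-confirms the sign seen in \eqref{eq:partial}. For the second derivative, differentiating $g'(u) = \ln 2\,\bigl[(u^2+u)L(u)^2\bigr]^{-1}$ gives $g''(u) = -\ln 2 \cdot \frac{(2u+1)L(u)^2 + (u^2+u)\cdot 2L(u)L'(u)}{(u^2+u)^2 L(u)^4}$. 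Substituting $L'(u) = -1/(u^2+u)$, the numerator bracket becomes $(2u+1)L(u)^2 - 2L(u) = L(u)\bigl[(2u+1)L(u) - 2\bigr]$. Since $L(u)>0$, the sign of $g''(u)$ is opposite to that of $(2u+1)L(u)-2 = (2u+1)\ln(1+1/u) - 2$. Thus the whole theorem reduces to the inequality
\begin{equation}
(2u+1)\ln\!\left(1+\frac{1}{u}\right) < 2 \qquad \text{for all } u>0. \nonumber
\end{equation}

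The main obstacle is this last inequality, which is where the real work sits. I would prove it by setting $h(u) = (2u+1)\ln(1+1/u) - 2$ and analyzing it: as $u\to\infty$, $\ln(1+1/u)\sim 1/u - 1/(2u^2)$, so $(2u+1)\ln(1+1/u)\to 2$ from below (one more term in the expansion confirms the sign), giving $h(\infty)=0^-$; as $u\to 0^+$, $(2u+1)\to 1$ and $\ln(1+1/u)\to+\infty$... so I must be careful — actually $h(u)\to+\infty$ as $u\to0^+$, so $h$ is \emph{not} negative throughout, meaning the naive claim as stated needs the physically relevant range or a sharper argument. The correct route is to observe that $u_j = \sum_{h\neq i} b_{ihj}\rho_h + c_{ij} \geq c_{ij} > 0$ is bounded below, but more to the point, I expect the intended argument shows $g$ concave on the range actually attained; alternatively one shows $h'(u) = 2\ln(1+1/u) - (2u+1)/(u^2+u) = 2\ln(1+1/u) - (2u+1)/(u(u+1))$ has constant sign on $(0,\infty)$ via a further reduction to $\ln(1+t) $ versus a rational function of $t=1/u$, yielding monotonicity of $h$ and hence $h(u)<\lim_{u\to 0^+}$ or $h(u) > \lim_{u\to\infty} h = 0$. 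Resolving exactly which monotonicity holds, and whether strict concavity is global on $\mathbb{R}^{n-1}_+$ or only on the relevant sub-region, is the crux; I would pin it down by checking $h$ is strictly decreasing (so $h(u) > 0$ for all $u$, contradicting concavity) versus strictly increasing, and reconcile with the two-cell pictures in Figure \ref{fig:twocell}, which visibly show concave curves — strongly suggesting the inequality $g''<0$ does hold, and that the sign bookkeeping above should be redone carefully, likely yielding that the bracket $(2u+1)\ln(1+1/u)-2$ is in fact negative once one accounts correctly for the expansion, so that $g''(u)<0$ globally on $u>0$ and the theorem follows on all of $\mathbb{R}^{n-1}_+$.
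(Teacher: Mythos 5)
Your architecture (reduce each pixel term to the scalar function $g(u)=\ln 2/\ln(1+1/u)$ composed with the affine map $u_j=\sum_{h\neq i}b_{ihj}\rho_h+c_{ij}$, then prove $g''<0$) is sound and is essentially a coordinate-free repackaging of the paper's argument, which computes the Hessian entries of $f_n$ directly and writes the Hessian as a sum of rank-one matrices times a scalar factor. But your execution has a genuine sign error at exactly the crux, and the key inequality is never proved. From your own (correct) formula, $g''(u)=-\ln 2\cdot\frac{(2u+1)L(u)-2}{(u^2+u)^2L(u)^3}$ with $L(u)=\ln(1+1/u)>0$, so $g''(u)<0$ requires $(2u+1)\ln(1+1/u)-2>0$, not $<0$ as you state. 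The inequality you set out to prove, $(2u+1)\ln(1+1/u)<2$, is simply false for every $u>0$: at $u=1$ the left side is $3\ln 2\approx 2.08$, it tends to $+\infty$ as $u\to 0^+$ (as you noticed), and the expansion $\ln(1+1/u)=\frac1u-\frac1{2u^2}+\frac1{3u^3}-\cdots$ gives $(2u+1)\ln(1+1/u)=2+\frac{1}{6u^2}+O(u^{-3})$, i.e.\ the limit $2$ is approached \emph{from above}, contrary to your "from below" claim and to your closing speculation that the bracket is "in fact negative." The correct statement is $(2u+1)\ln(1+1/u)>2$ for all $u>0$, which together with the leading minus sign yields $g''<0$ globally, so no restriction to $u\geq c_{ij}$ is needed. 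This is precisely the paper's lemma $q(u)=2-(2u+1)\ln(1+1/u)<0$, which it proves by showing $q''(u)=-1/(u^2(u+1)^2)<0$, hence $q'$ is strictly decreasing with $q'(u)\to 0$ as $u\to\infty$, hence $q'>0$, hence $q$ is strictly increasing with $\lim_{u\to\infty}q(u)=0$, giving $q<0$. Your proposal stops short of any such argument and ends by explicitly leaving the monotonicity question unresolved, so the central step of the theorem is missing, and the provisional sign conclusions you do draw are wrong.

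A secondary point: even with $g''<0$ in hand, your one-line claim that "strictness follows because at least one $b_{ikj}$-coefficient is strictly positive" is too quick. Each summand $g(u_j(\boldsymbol\rho))$ is constant along any direction $d$ with $\sum_{h\neq i}b_{ihj}d_h=0$, so a single pixel term is not strictly concave on $\mathbb{R}^{n-1}_+$ when $n-1\geq 2$; strict concavity of the sum needs the coefficient vectors $(b_{ihj})_{h\neq i}$, $j\in\CJ_i$, to span $\mathbb{R}^{n-1}$ (the paper makes the same implicit assumption when it declares the sum of negatively weighted rank-one matrices in \eqref{eq:hessianreformulation} negative definite). This is worth a sentence, but the substantive defect is the unproven, sign-reversed key inequality.
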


\begin{proof}
Without loss of generality, consider $f_n$ and its $(n-1)
\times (n-1)$ Hessian matrix. Let
$u = \sum_{h \in \CN \setminus \{n\}} b_{nhj} \rho_h + c_{nj}$. For
two cells $k$ and $h$, the Hessian element has the following
expression.
%

\begin{equation}
\label{eq:hessianelement}
\frac{\partial^2 f_n}{\partial \rho_k \partial \rho_h} =  \ln(2) \sum_{j \in \CJ_n} \frac{b_{nkj} b_{nhj}}{a_j} \cdot
\frac{\ln(1+\frac{1}{u}) [2 - (2u+1)\ln(1+\frac{1}{u})]}
{[\ln^2(1+\frac{1}{u})(u^2+u)]^2}
\end{equation}

Let $q(u)= 2- (2u+1)\ln(1+\frac{1}{u})$. We show that $q(u)<0$ for $u
>0$. This holds, for example, for $u=1$.
Next, $\lim_{u \rightarrow \infty} q(u) = 2 - \lim_{u \rightarrow \infty} \ln
\left[ (1+\frac{1}{u})^u (1+\frac{1}{u})^u (1+\frac{1}{u})\right]$
$=0$. Consider $q'(u)$ and $q''(u)$:
$q'(u) = -2 \ln\left(1 + \frac{1}{u}\right) + \frac{1}{u} + \frac{1}{u+1}$,
and  $q''(u) = \frac{-1}{u^2(u+1)^2} <0, \forall u > 0$.
%
Therefore $q'(u)$ is strictly decreasing and $\lim_{u \rightarrow \infty} q'(u) = 0$.
Hence, $q'(u) > 0$, meaning that $q(u)$ is strictly increasing for
$u > 0$. This, together with $q(1)<0$ and $\lim_{u
\rightarrow \infty} q(u) = 0$, prove that $q(u)<0$, $\forall u>0$.
By the definition of $u$, for ${\boldsymbol \rho} \geq 0$, $u \geq
c_{nj}$ which a strictly positive number for non-zero noise
power. Hence \eqref{eq:hessianelement} is well-defined and negative
for all ${\boldsymbol \rho} \geq 0$. Next, observe that
the Hessian matrix is the result of the following expression.
\begin{equation}
\label{eq:hessianreformulation}
\sum_{j \in \CJ_i} \frac{(b_{n1j}, \dots, b_{n(n-1)j}) (b_{n1j}, \dots, b_{n(n-1)j})^T}{a_j}
\cdot \frac{\ln(2) \ln(1+\frac{1}{u})q(u)}{[\ln^2(1+\frac{1}{u})(u^2+u)]^2}
\end{equation}

Because of the form of \eqref{eq:hessianreformulation} and that $q(u)<0,
\forall u \geq c_{nj}>0$, the Hessian matrix is negative
definite for any ${\boldsymbol \rho} \geq 0$. Hence the conclusion.
\end{proof}

From the concavity result, it follows that, for any cell $i$,
$f_i(\rho_1, \dots, \rho_{i-1}, \rho_{i+1}, \dots, \rho_n) -
\rho_i$ exhibits a strict radially quasiconcave structure.
A function is radially quasiconcave, if for a given stationary
positive point, which is in our case a solution to $f_i(\rho_1, \dots,
\rho_{i-1}, \rho_{i+1}, \dots, \rho_n) =
\rho_i$, and any scalar in range $(0,1)$, the
function value of the scaled point is greater than or equal to
zero. If the value is positive, the function is strictly radially
quasiconcave.

\begin{corollary}
\label{theo:radiallyquasiconcave}
For each $i \in \CN$, $f_i (\rho_1, \dots,
\rho_{i-1}, \rho_{i+1}, \dots, \rho_n) - \rho_i$ is strictly
radially quasiconcave, i.e., if
$f_i (\rho_1, \dots, \rho_{i-1}, \rho_{i+1}, \dots, \rho_n) = \rho_i$,
then $f_i (\lambda \rho_1, \dots, \lambda \rho_{i-1}, \lambda \rho_{i+1}, \dots, \lambda \rho_n) > \lambda \rho_i$ for any $\lambda \in (0,1)$.
\end{corollary}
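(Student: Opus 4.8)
The plan is to obtain the corollary as a direct consequence of the strict concavity of $f_i$ established in Theorem~\ref{theo:concave}, combined with the observation recorded just after \eqref{eq:rhoi} that $f_i(\mathbf{0}) > 0$ whenever the noise power $\sigma^2$ is non-zero. Write $\boldsymbol{\rho}_{-i} = (\rho_1, \dots, \rho_{i-1}, \rho_{i+1}, \dots, \rho_n) \in \mathbb{R}_+^{n-1}$ for the vector of the other cells' loads, so that $f_i$ is viewed as a map $\mathbb{R}_+^{n-1} \rightarrow \mathbb{R}_+$, and suppose $\boldsymbol{\rho}_{-i}$ satisfies $f_i(\boldsymbol{\rho}_{-i}) = \rho_i$.

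First I would write the scaled point, for $\lambda \in (0,1)$, as a convex combination: $\lambda \boldsymbol{\rho}_{-i} = \lambda\, \boldsymbol{\rho}_{-i} + (1-\lambda)\, \mathbf{0}$, where both $\boldsymbol{\rho}_{-i}$ and $\mathbf{0}$ lie in $\mathbb{R}_+^{n-1}$. Assuming these two points are distinct, i.e., $\boldsymbol{\rho}_{-i} \neq \mathbf{0}$, strict concavity of $f_i$ on $\mathbb{R}_+^{n-1}$ (Theorem~\ref{theo:concave}) gives
\[
f_i(\lambda \boldsymbol{\rho}_{-i}) \;>\; \lambda\, f_i(\boldsymbol{\rho}_{-i}) + (1-\lambda)\, f_i(\mathbf{0}) \;=\; \lambda \rho_i + (1-\lambda)\, f_i(\mathbf{0}).
\]
Since $1-\lambda > 0$ and $f_i(\mathbf{0}) > 0$, the right-hand side strictly exceeds $\lambda \rho_i$, yielding $f_i(\lambda \boldsymbol{\rho}_{-i}) > \lambda \rho_i$ as claimed.

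The only remaining case is the degenerate one $\boldsymbol{\rho}_{-i} = \mathbf{0}$, where the convex-combination step says nothing; but then the hypothesis forces $\rho_i = f_i(\mathbf{0}) > 0$, and $f_i(\lambda \mathbf{0}) = f_i(\mathbf{0}) = \rho_i > \lambda \rho_i$ holds trivially for $\lambda \in (0,1)$. I do not expect a genuine obstacle here: the argument is essentially one line of convex-analysis reasoning, and the only care needed is (a) treating $\boldsymbol{\rho}_{-i} = \mathbf{0}$ separately so that strict concavity is applied to two distinct points, and (b) noting that no positivity of the individual components of $\boldsymbol{\rho}_{-i}$ is actually required beyond $\boldsymbol{\rho}_{-i} \ge 0$, so the statement in fact holds for any non-negative solution of $f_i(\boldsymbol{\rho}_{-i}) = \rho_i$.
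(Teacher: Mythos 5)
Your proposal is correct and follows essentially the same route as the paper: write $\lambda \boldsymbol{\rho}_{-i}$ as the convex combination $\lambda \boldsymbol{\rho}_{-i} + (1-\lambda)\mathbf{0}$, apply the strict concavity from Theorem~\ref{theo:concave}, and use $f_i(\mathbf{0})>0$. Your separate treatment of the degenerate case $\boldsymbol{\rho}_{-i}=\mathbf{0}$ is a small point of extra care that the paper's proof glosses over, but it does not change the argument.
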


\begin{proof}
Note that
$f_i (\lambda (\rho_1, \dots, \rho_{i-1}, \rho_{i+1}, \dots, \rho_n))$
$= f_i(\lambda (\rho_1, \dots, \rho_{i-1}, \rho_{i+1}, \dots, \rho_n) + 0 (1-\lambda)$.
By Theorem \ref{theo:concave}, we have
$f_i (\lambda (\rho_1, \dots, \rho_{i-1}, \rho_{i+1}, \dots, \rho_n)) > \lambda f_i(\rho_1, \dots, \rho_{i-1}, \rho_{i+1}, \dots, \rho_n) + (1-\lambda)f_i(0)$. Since $f_i ((\rho_1, \dots, \rho_{i-1}, \rho_{i+1}, \dots, \rho_n)) = \rho_i$ and $f_i(0)>0$, the result follows.
\end{proof}

In a real-life LTE network, if the capacity is sufficient to accommodate
the demand, then the network load will be at a stable working
point, which should be
unique. Thus the performance model ${\boldsymbol f}$ is reasonable only if
uniqueness holds mathematically. The following theorem states this is
indeed the case. In the rest of the paper, the unique solution,
if it exists, is denoted by ${\boldsymbol \rho}^*$.

\begin{theorem}
\label{theo:uniqueness}
If $S_{{\boldsymbol f}({\boldsymbol \rho})={\boldsymbol \rho}} \not= \emptyset$, then it is a singleton, i.e.,
${\boldsymbol f}({\boldsymbol \rho})={\boldsymbol \rho}$ has
at most one solution ${\boldsymbol \rho}^*$ in ${\mathbb R}_+^n$.
\end{theorem}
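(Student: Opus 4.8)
The plan is to run a textbook monotone‑plus‑subhomogeneous fixed‑point argument, using only three facts already in hand: $\boldsymbol{f}$ is coordinatewise increasing, $\boldsymbol{f}(0)>0$ (for non‑zero noise power), and each $f_i$ is strictly radially quasiconcave about any of its fixed points (Corollary~\ref{theo:radiallyquasiconcave}, which rests on the strict concavity of Theorem~\ref{theo:concave}). I argue by contradiction, assuming $\boldsymbol{x}\neq\boldsymbol{y}$ are two solutions of $\boldsymbol{\rho}=\boldsymbol{f}(\boldsymbol{\rho})$ in $\mathbb{R}_+^n$.

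The first step is to note that any fixed point is strictly positive: since $\boldsymbol{x}\geq 0$ and $\boldsymbol{f}$ is increasing, $\boldsymbol{x}=\boldsymbol{f}(\boldsymbol{x})\geq\boldsymbol{f}(0)>0$, and likewise $\boldsymbol{y}>0$. Hence the ratios $y_i/x_i$ are well defined, and I set $\lambda^\star=\min_{i\in\CN}y_i/x_i$, a finite strictly positive number, together with the symmetric quantity $\mu^\star=\min_{i\in\CN}x_i/y_i$. If $\lambda^\star\geq 1$ then $\boldsymbol{y}\geq\boldsymbol{x}$, so each $x_i/y_i\leq 1$ and hence $\mu^\star\leq 1$; moreover $\mu^\star=1$ would force $\boldsymbol{x}=\boldsymbol{y}$, so in that case $\mu^\star<1$. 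Thus at least one of $\lambda^\star,\mu^\star$ lies in $(0,1)$, and after interchanging the names of the two solutions if necessary I may assume $\lambda^\star\in(0,1)$. By the definition of the minimum, $\lambda^\star\boldsymbol{x}\leq\boldsymbol{y}$ with equality in at least one coordinate, say $\lambda^\star x_{i^\star}=y_{i^\star}$.

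The contradiction then comes out of chaining the three properties at coordinate $i^\star$: monotonicity of $f_{i^\star}$ applied to $\lambda^\star\boldsymbol{x}\leq\boldsymbol{y}$ gives $y_{i^\star}=f_{i^\star}(\boldsymbol{y})\geq f_{i^\star}(\lambda^\star\boldsymbol{x})$, while Corollary~\ref{theo:radiallyquasiconcave} applied to the fixed point $\boldsymbol{x}$ with scalar $\lambda^\star\in(0,1)$ gives $f_{i^\star}(\lambda^\star\boldsymbol{x})>\lambda^\star f_{i^\star}(\boldsymbol{x})=\lambda^\star x_{i^\star}$. Hence $y_{i^\star}>\lambda^\star x_{i^\star}$, contradicting $y_{i^\star}=\lambda^\star x_{i^\star}$. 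Therefore $S_{\boldsymbol{f}(\boldsymbol{\rho})=\boldsymbol{\rho}}$, when non‑empty, is a singleton, and we may write ${\boldsymbol{\rho}}^*$ for its unique element.

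I expect no serious obstacle, since the analytic effort was already spent in Theorem~\ref{theo:concave}; the only thing that needs care is the combinatorial bookkeeping of the middle step, namely producing a coordinate in which the scaled inequality $\lambda^\star\boldsymbol{x}\leq\boldsymbol{y}$ is tight \emph{while} keeping the scaling factor strictly below one (handled by the symmetric $\mu^\star$ and the optional relabeling). If one preferred to avoid citing Corollary~\ref{theo:radiallyquasiconcave}, the same inequality follows inline from concavity of $f_{i^\star}$ and $f_{i^\star}(0)>0$, via $f_{i^\star}(\lambda^\star\boldsymbol{x})=f_{i^\star}\!\left(\lambda^\star\boldsymbol{x}+(1-\lambda^\star)\,0\right)\geq\lambda^\star f_{i^\star}(\boldsymbol{x})+(1-\lambda^\star)f_{i^\star}(0)>\lambda^\star f_{i^\star}(\boldsymbol{x})$; in particular plain concavity already suffices for uniqueness.
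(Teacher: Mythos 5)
Your proof is correct and follows essentially the same route as the paper: assume two fixed points, take the minimal coordinate ratio $\lambda^\star=\min_i y_i/x_i$, and combine monotonicity of $\boldsymbol{f}$ with the strict radial quasiconcavity of Corollary~\ref{theo:radiallyquasiconcave} (itself just concavity plus $\boldsymbol{f}(0)>0$) at the coordinate where the ratio is attained to reach a contradiction. Your explicit handling of the case $\lambda^\star\geq 1$ via the symmetric quantity $\mu^\star$ and relabeling is a slightly more careful version of the paper's ``scale down the other solution instead'' step, but it is the same argument.
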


\begin{proof}
Suppose there are two solutions ${\boldsymbol \rho}^1$ and
${\boldsymbol \rho}^2$, both satisfying \eqref{eq:rho}, and
${\boldsymbol \rho}^1 \not= {\boldsymbol \rho}^2$.  Let $m \in
\text{argmin}_{i=1,\dots,n}
\rho^1_i / \rho^2_i$, and $\lambda = \rho^1_m / \rho^2_m$. Thus
$\rho^1_m = \lambda \rho^2_m$.  Assume $\lambda < 1$.  Then by
construction, $\lambda {\boldsymbol
\rho}^2 \leq {\boldsymbol \rho}^1$, and because ${\boldsymbol f}$ is strictly
increasing in the domain of $\mathbb{R}_+^n$, $f_m(\lambda {\boldsymbol
\rho}^2) \leq f_m({\boldsymbol \rho}^1)$.  Also, by Lemma
\ref{theo:radiallyquasiconcave}, $\lambda \rho^2_m < f_m(\lambda
{\boldsymbol \rho}^2)$, and thus $f_m({\boldsymbol \rho}^1) > \lambda
\rho_m^2$.  Note that $f_m({\boldsymbol \rho}^1) = \rho_m^1 = \lambda
\rho_m^2$ gives an contradiction. Therefore $\lambda > 1$. Considering
scaling down ${\boldsymbol \rho}^1$ with $\lambda$ instead, and
applying the same line of argument, a similar contradiction is
obtained. Hence the conclusion.
\end{proof}

\section{Determining Solution Existence and Lower Bounding}
\label{sec:linear}

Having proven solution uniqueness, we examine the existence of
${\boldsymbol \rho}^*$, that is, whether or not \eqref{eq:rho} has a
fixed point. There are a number of theorems characterizing the
existence of a fixed point (e.g., Brouwer's fixed-point theorem in
topology).  However, these results do not apply to \eqref{eq:rho}
because, in general, the output of function ${\boldsymbol f}$ is not
confined to a compact set in ${\mathbb R}^n_+$. In this section, we use a
linear equation system for analyzing solution existence. To this end,
we first present and prove some basic properties of the optimization
formulation \eqref{eq:optrho}.

\begin{theorem}
\label{theo:optsolution}
Assume $S_{{\boldsymbol \rho} \geq  {\boldsymbol f}({\boldsymbol \rho})} \not=
\emptyset$, i.e., there exists ${\boldsymbol {\bar \rho}} \geq
 {\boldsymbol f}({\boldsymbol {\bar \rho}}) > 0$, then \eqref{eq:optrho} has an optimal
solution.
\end{theorem}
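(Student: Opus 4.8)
The plan is to show that the feasible region of \eqref{eq:optrho} can be intersected with a suitable compact set without destroying the infimum, so that Weierstrass' theorem applies to the continuous objective \eqref{eq:optrhoobj}. First I would observe that the hypothesis gives a feasible point ${\boldsymbol {\bar \rho}}$, so the feasible region $S_{{\boldsymbol \rho} \geq {\boldsymbol f}({\boldsymbol \rho})}$ is nonempty, and the objective is bounded below by $0$ on $\mathbb{R}_+^n$; hence the infimum $v^\star = \inf\{\sum_i \rho_i : {\boldsymbol \rho} \in S_{{\boldsymbol \rho} \geq {\boldsymbol f}({\boldsymbol \rho})}\}$ is a finite nonnegative number. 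The key reduction is that any point ${\boldsymbol \rho}$ with $\sum_i \rho_i \leq \sum_i \bar\rho_i$ lies in the box $[0, \sum_i \bar\rho_i]^n$, so the infimum is unchanged if we restrict attention to ${\boldsymbol \rho} \in S_{{\boldsymbol \rho} \geq {\boldsymbol f}({\boldsymbol \rho})} \cap [0,\beta]^n$ with $\beta = \sum_i \bar\rho_i$; this set is nonempty (it contains ${\boldsymbol {\bar \rho}}$) and bounded.

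Next I would argue closedness. Since ${\boldsymbol f}$ is continuous on $\mathbb{R}_+^n$ (the third immediate observation following \eqref{eq:rhoi}), the set $\{{\boldsymbol \rho} \geq 0 : {\boldsymbol \rho} \geq {\boldsymbol f}({\boldsymbol \rho})\}$ is closed, being the preimage of the closed set $\{ {\boldsymbol x} \geq 0\}$ under the continuous map ${\boldsymbol \rho} \mapsto {\boldsymbol \rho} - {\boldsymbol f}({\boldsymbol \rho})$ intersected with $\mathbb{R}_+^n$; intersecting with the closed box $[0,\beta]^n$ keeps it closed. Thus $S_{{\boldsymbol \rho} \geq {\boldsymbol f}({\boldsymbol \rho})} \cap [0,\beta]^n$ is a nonempty compact subset of $\mathbb{R}_+^n$, and the linear (hence continuous) objective $\sum_i \rho_i$ attains its minimum on it by the extreme value theorem. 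That minimizer has objective value $\leq \sum_i \bar\rho_i$ and, by the reduction above, equals the global infimum $v^\star$ over the full feasible region, so it is an optimal solution of \eqref{eq:optrho}.

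One subtlety worth spelling out: the box $[0,\beta]^n$ is chosen precisely so that it contains every feasible point that could possibly beat ${\boldsymbol {\bar \rho}}$ in objective value, which is why truncating to it does not change the optimal value — any feasible ${\boldsymbol \rho}$ outside the box already has $\sum_i \rho_i > \beta \geq \sum_i \bar\rho_i$ and is therefore not a candidate minimizer. The only real ingredient beyond elementary analysis is continuity of ${\boldsymbol f}$, already recorded in Section \ref{sec:model}; there is no need here for concavity (Theorem \ref{theo:concave}) or for the limit behavior (Theorem \ref{theo:partiallimit}). I do not anticipate a genuine obstacle — the argument is a standard coercivity-plus-compactness packaging — the only thing to be careful about is not to claim the whole feasible region is compact (it need not be, since ${\boldsymbol f}$ is not bounded above), but rather to compactify only the sublevel set relevant to the minimization.
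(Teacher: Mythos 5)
Your proposal is correct and follows essentially the same route as the paper: truncate the feasible region to a nonempty compact set containing ${\boldsymbol {\bar \rho}}$ (using continuity of ${\boldsymbol f}$ for closedness) and invoke Weierstrass. The only difference is the truncating set---you use the box $[0,\sum_i \bar\rho_i]^n$ keyed to the objective value, whereas the paper uses $\{{\boldsymbol \rho} \leq {\boldsymbol {\bar \rho}}\}$; your choice has the small advantage of making explicit why the truncated minimizer is also optimal for the full problem, a step the paper leaves implicit.
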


\begin{proof}
Consider the optimization problem $\min \sum_{i \in {\cal N}}
\rho_i, {\boldsymbol \rho}\in {\bar S}$, where ${\bar
S} = S_{{\boldsymbol \rho} \geq  {\boldsymbol f}({\boldsymbol \rho})} \cap
\{{\boldsymbol \rho} \leq {\boldsymbol {\bar \rho}}\}$.
By the assumption in the theorem, ${\bar S} \not= \emptyset$.
From the definition of $\bar S$, it is clear that
any point being arbitrarily close to $\bar S$ (i.e.,
boundary point) is in the set, thus $\bar S$ is closed.
In addition, $\bar S$ is bounded since $\bar S \subseteq
\mathbb R^n_{+} \cap \{{\boldsymbol \rho} \leq {\boldsymbol {\bar \rho}}\}$.
Hence $\bar S$ is compact, and the result follows from Weierstrass
theorem in optimization.
\end{proof}

\begin{corollary}
\label{theo:opttoexistence}
If there exists ${\boldsymbol {\bar \rho}} \geq  {\boldsymbol f}({\boldsymbol {\bar
\rho}}) > 0$, then $S_{{\boldsymbol \rho} =  {\boldsymbol f}({\boldsymbol \rho})}
\not= \emptyset$.
\end{corollary}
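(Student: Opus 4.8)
The plan is to read this off from Theorem~\ref{theo:optsolution} together with the observation recorded right after \eqref{eq:optrho} --- namely, that every optimum of \eqref{eq:optrho} is a fixed point of ${\boldsymbol f}$. First, since the hypothesis of the corollary is exactly the hypothesis of Theorem~\ref{theo:optsolution}, that theorem supplies an optimal solution ${\boldsymbol \rho}^*$ of \eqref{eq:optrho}. By feasibility, ${\boldsymbol \rho}^* \in \mathbb{R}_+^n$ and ${\boldsymbol \rho}^* \geq {\boldsymbol f}({\boldsymbol \rho}^*)$; moreover, since ${\boldsymbol f}$ is increasing and ${\boldsymbol f}(0) > 0$, we get ${\boldsymbol \rho}^* \geq {\boldsymbol f}({\boldsymbol \rho}^*) \geq {\boldsymbol f}(0) > 0$, so ${\boldsymbol \rho}^*$ is in fact strictly positive.

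Second, I would argue that constraint \eqref{eq:optrhocons} is active at ${\boldsymbol \rho}^*$. Suppose instead $\rho_\ell^* > f_\ell({\boldsymbol \rho}^*)$ for some $\ell \in \CN$, and perturb ${\boldsymbol \rho}^*$ to ${\boldsymbol \rho}'$ by decreasing the $\ell$-th coordinate to any value in the interval $(f_\ell({\boldsymbol \rho}^*), \rho_\ell^*)$, leaving the other coordinates unchanged. Because $f_\ell$ does not depend on $\rho_\ell$, we have $f_\ell({\boldsymbol \rho}') = f_\ell({\boldsymbol \rho}^*) < \rho_\ell'$; and because each $f_k$ with $k \neq \ell$ is strictly increasing in $\rho_\ell$ (the first of the three observations following \eqref{eq:rhoi}), we have $f_k({\boldsymbol \rho}') \leq f_k({\boldsymbol \rho}^*) \leq \rho_k^* = \rho_k'$. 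Hence ${\boldsymbol \rho}'$ is feasible for \eqref{eq:optrho} with strictly smaller objective value, contradicting optimality of ${\boldsymbol \rho}^*$. Therefore ${\boldsymbol \rho}^* = {\boldsymbol f}({\boldsymbol \rho}^*)$, i.e. ${\boldsymbol \rho}^* \in S_{{\boldsymbol \rho} = {\boldsymbol f}({\boldsymbol \rho})}$, which establishes $S_{{\boldsymbol \rho} = {\boldsymbol f}({\boldsymbol \rho})} \neq \emptyset$.

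There is no genuinely hard step here; the only points that need a moment of care are that the single-coordinate downward perturbation preserves all the remaining inequality constraints --- which is precisely where strict monotonicity of ${\boldsymbol f}$ is invoked --- and that the optimal point of \eqref{eq:optrho} is automatically non-negative (indeed positive), so that it legitimately witnesses feasibility of the equality system \eqref{eq:rho} rather than merely of its relaxation.
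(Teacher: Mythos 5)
Your proof is correct and follows essentially the same route as the paper, which also derives the corollary from Theorem~\ref{theo:optsolution} together with the observation (made right after \eqref{eq:optrho}) that strict monotonicity of ${\boldsymbol f}$ forces \eqref{eq:optrhocons} to hold with equality at any optimum. You merely spell out the single-coordinate perturbation argument that the paper leaves implicit in that observation.
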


\begin{proof}
Follows immediately from Theorem \ref{theo:optsolution} and the
previously made observation that any optimal solution to
\eqref{eq:optrho} satisfies \eqref{eq:optrhocons} with equality.
\end{proof}

To further characterize solution existence, we define the following type
of linear equation systems,
\begin{equation}
\label{eq:linear}
{\boldsymbol \rho} = {\boldsymbol h}({\boldsymbol \rho}) = {\boldsymbol H}\cdot({\boldsymbol \rho} - {\boldsymbol {\widehat \rho}}) + {\boldsymbol f(\widehat{\boldsymbol\rho})},
\end{equation}
where ${\boldsymbol h} = (h_1, \dots, h_n)$ is a vector of linear
functions, and each of them is defined in ${\mathbb R}^{n-1}_+
\rightarrow {\mathbb R}_+$, ${\boldsymbol {\widehat \rho}}$ is a
vector in ${\mathbb R}^n_+$ with given values, and ${\boldsymbol
f(\widehat{\boldsymbol\rho})}$ is a vector-function with elements
defined by \eqref{eq:rhois}. In \eqref{eq:linear}, ${\boldsymbol H}$
is an $n \times n$ matrix where the diagonal elements, $H_{ii},
i=1,\dots, n$ are zeros, and the other elements $H_{ik}, i \not=k$,
are strictly positive. Note that if ${\boldsymbol H}$ is the Jacobian of
function ${\boldsymbol f}$ evaluated at point ${\boldsymbol
{\widehat \rho}}$, \eqref{eq:linear} is a linearization of the
non-linear equation system \eqref{eq:rho} where the right-hand side of
\eqref{eq:linear} represents the tangent hyperplane to
function ${\boldsymbol f({\boldsymbol\rho})}$ at
$\widehat{\boldsymbol\rho}$. Such linear approximations are further
discussed in Section~\ref{sec:optimization}.

Observing the fact that the partial derivative
\eqref{eq:partial} asymptotically approaches a constant, as formulated
in Theorem
\ref{theo:partiallimit}, we consider linear
approximation of ${\boldsymbol f}$ by means of the linear function
having the limit values of the partial derivatives as the matrix
elements in ${\boldsymbol H}$, and passing through the point defined
by the load function values with zero load. Define ${\boldsymbol
h}^0$ the case of ${\boldsymbol h}$ where $H_{ik} = \ln(2)
\sum_{j \in \CJ_i} b_{ikj} / a_j$ for $k\neq i$, and ${\boldsymbol {\widehat \rho}}=0$.
For this linear approximation, there are similarities between the
elements of ${\boldsymbol H}$ and the UMTS interference-coupling
matrix (see, e.g., \cite{EiGeKoMaWe05,NaDo06}) in that both
capture the relation between gain factors of the serving and
interfering cells; however, the target QoS in the
interference-coupling matrix is link quality, whilst in ${\boldsymbol
H}$ it is given by the amount of user traffic demand.

If $S_{{\boldsymbol
\rho} = {\boldsymbol h}^0 ({\boldsymbol \rho})} \not= \emptyset$,
the solution, denoted by ${\boldsymbol \rho}_{\boldsymbol h}^0$, is
clearly unique.  The lemma below states that the linear function ${\boldsymbol h}^0$
provides an under-estimation of the true load function $\boldsymbol
f$, thus ${\boldsymbol \rho}_{\boldsymbol h}^0$, if exists,
gives a lower bound on the solution to the non-linear system \eqref{eq:rho}.

\begin{lemma}
\label{theo:under}
${\boldsymbol h}^0({\boldsymbol \rho}) \leq {\boldsymbol f}({\boldsymbol \rho})$
for any ${\boldsymbol \rho} \geq 0$.
\end{lemma}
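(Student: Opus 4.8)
The plan is to prove the vector inequality componentwise: it suffices to show, for each cell $i \in \CN$, that $f_i({\boldsymbol \rho}) \geq h_i^0({\boldsymbol \rho}) = f_i(0) + \sum_{k \in \CN\setminus\{i\}} H_{ik}\rho_k$ for every ${\boldsymbol \rho} \geq 0$, where $H_{ik} = \ln(2)\sum_{j\in\CJ_i} b_{ikj}/a_j$. The core of the argument is the pointwise gradient estimate
\[
\frac{\partial f_i}{\partial\rho_k}({\boldsymbol \rho}) \;\geq\; H_{ik}\qquad\text{for all } {\boldsymbol \rho}\geq 0 \text{ and } k\neq i .
\]
Granting this, I would integrate $f_i$ along the segment from $0$ to ${\boldsymbol \rho}$: since $f_i$ does not depend on $\rho_i$, $f_i({\boldsymbol \rho}) - f_i(0) = \int_0^1 \sum_{k\neq i}\frac{\partial f_i}{\partial\rho_k}(t{\boldsymbol \rho})\,\rho_k\,dt$, and because $\rho_k\geq 0$ the gradient estimate gives $f_i({\boldsymbol \rho}) - f_i(0) \geq \sum_{k\neq i} H_{ik}\rho_k = h_i^0({\boldsymbol \rho}) - f_i(0)$, which is the claim. (Equivalently, and without integration: $g_i := f_i - h_i^0$ is concave by Theorem~\ref{theo:concave}, satisfies $g_i(0)=0$, and the gradient estimate says $\nabla g_i \geq 0$ on $\mathbb{R}^{n-1}_+$, so $g_i$ is nondecreasing along every coordinate ray from the origin, hence $g_i\geq 0$.)

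The gradient estimate is exactly where Theorems~\ref{theo:partiallimit} and~\ref{theo:concave} are used together. Fix $i$ and $k\neq i$, let ${\boldsymbol e}_k$ be the $k$-th coordinate vector, and consider the scalar function $\psi(t) = \frac{\partial f_i}{\partial\rho_k}({\boldsymbol \rho}+t{\boldsymbol e}_k)$ for $t\geq 0$. Its derivative is the diagonal Hessian entry $\frac{\partial^2 f_i}{\partial\rho_k^2}$, which is negative on $\mathbb{R}^{n-1}_+$ because the Hessian of $f_i$ is negative definite there (Theorem~\ref{theo:concave}); concretely it equals $\ln(2)\sum_{j\in\CJ_i}\frac{b_{ikj}^2}{a_j}\cdot\frac{\ln(1+\frac1u)\,q(u)}{[\ln^2(1+\frac1u)(u^2+u)]^2}$ with $q(u)<0$. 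Hence $\psi$ is nonincreasing in $t$. On the other hand, $\lim_{t\to\infty}\psi(t) = H_{ik}$ by Theorem~\ref{theo:partiallimit}. A nonincreasing function whose limit is $H_{ik}$ stays $\geq H_{ik}$ everywhere, so in particular $\psi(0) = \frac{\partial f_i}{\partial\rho_k}({\boldsymbol \rho}) \geq H_{ik}$, as needed.

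The one point that requires care — and the only real obstacle — is that the limit in Theorem~\ref{theo:partiallimit} must be verified \emph{with the remaining coordinates of ${\boldsymbol \rho}$ held at arbitrary nonnegative values}, not merely at zero. This is in fact immediate from the structure of \eqref{eq:partial}: the $j$-th summand depends on the other coordinates only through $u_j = \sum_{h\neq i} b_{ihj}\rho_h + c_{ij}$, and along the ray $t\mapsto{\boldsymbol \rho}+t{\boldsymbol e}_k$ one has $u_j\to\infty$ whenever $b_{ikj}>0$, so the argument in the proof of Theorem~\ref{theo:partiallimit} (that the per-pixel factor, stripped of the constant $\ln(2)b_{ikj}/a_j$, tends to $1$) applies verbatim; the degenerate pixels with $b_{ikj}=0$ contribute $0$ to both $\partial f_i/\partial\rho_k$ and $H_{ik}$ and are dismissed separately. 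Everything else — differentiability, the validity of the fundamental-theorem-of-calculus step, and interchanging the finite sum over $\CJ_i$ with the integral — is routine, since by the observations following \eqref{eq:rhoi} each $f_i$ is $C^2$ and strictly increasing on $\mathbb{R}^{n-1}_+$.
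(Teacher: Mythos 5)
Your proof is correct, and at its core it rests on the same mathematical content as the paper's proof, though the outer packaging is genuinely different. Both arguments bottom out in the single scalar inequality $u(u+1)\ln^2\left(1+\frac{1}{u}\right)\le 1$ for $u>0$ (your pointwise gradient bound $\frac{\partial f_i}{\partial\rho_k}\ge H_{ik}$ is exactly the reciprocal statement), and both establish it by a monotonicity-plus-limit-at-infinity argument that borrows the sign fact $2-(2u+1)\ln\left(1+\frac{1}{u}\right)<0$ from the proof of Theorem~\ref{theo:concave} together with the limit $u(u+1)\ln^2\left(1+\frac{1}{u}\right)\to 1$. Where you differ is the wrapper: you state the bound as a gradient estimate valid on all of $\mathbb{R}^{n-1}_+$, obtained by observing that the partial derivative \eqref{eq:partial} is nonincreasing along the ${\boldsymbol e}_k$-ray and tends to $H_{ik}$ (your explicit check that the limit of Theorem~\ref{theo:partiallimit} persists with the other coordinates held at arbitrary nonnegative values, with $b_{ikj}=0$ pixels dismissed, is precisely the point that needs saying), and then integrate along the segment from $0$ to ${\boldsymbol \rho}$; the paper instead works pixel by pixel, anchoring at the equality point $u=c_{ij}$ and showing the auxiliary function $\frac{1}{\ln(1+1/u)}-(u-c_{ij})$ is increasing for $u\ge c_{ij}$, which reduces to the same inequality \eqref{eq:qzone}. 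Your formulation buys a reusable and conceptually clean fact --- by concavity, the slope of $f_i$ everywhere dominates its asymptotic slope, so the tangent-at-infinity line through ${\boldsymbol f}(0)$ underestimates ${\boldsymbol f}$ --- at the cost of a multivariate fundamental-theorem-of-calculus step; the paper's version stays one-dimensional per pixel and needs no integration. Either way, your proof is complete and sound.
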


\begin{proof}
We prove the validity of the result for an arbitrary cell $i$, that is,
$h_i^0({\boldsymbol \rho}) \leq f_i({\boldsymbol \rho})$,
${\boldsymbol \rho} \geq 0$. Because both
$h_i^0({\boldsymbol \rho})$ and $f_i({\boldsymbol \rho})$
are formed by a sum over $j \in \CJ_i$, it is sufficient to
establish the inequality for any $j \in \CJ_i$.
Let $u = \sum_{k \in \CN \setminus \{i\}} b_{ikj}\rho_k + c_{ij}$. The proof boils down
to showing the following inequality.

\vspace{-2mm}
$$
\frac{1} {\log_2(1+\frac{1}{u})} - (u-c_{ij}) \ln(2)
= \ln(2) \left(\frac{1} {\ln(1+\frac{1}{u})} - (u-c_{ij})\right)
\geq \frac{1}{\log_2(1+\frac{1}{c_{ij}})} = \frac{\ln(2)}{\ln(1+\frac{1}{c_{ij}})}
$$

Note that $u \geq c_{ij}$ by definition.
The inequality holds as equality for $u=c_{ij}$. It is then sufficient to prove that
$\frac{1}{\ln(1+\frac{1}{u})}-(u-c_{ij}) $ is increasing for $u \geq c_{ij}$.
Taking the derivative and doing some simple manipulations, one can conclude that the derivative
is non-negative corresponds to the inequality below.

\begin{equation}
\label{eq:qzone}
q(u) = u(u+1) \ln^2\left(1+\frac{1}{u}\right) \leq 1, ~~u \geq c_{ij}
\end{equation}

One can show easily that $\lim_{u \rightarrow 0^+} q(u) = 0$, hence
\eqref{eq:qzone} is satisfied for some $u \leq c_{ij}$.
Moreover, $\lim_{u \rightarrow \infty} q(u) = \lim_{u \rightarrow
\infty} \ln(1+\frac{1}{u})^u \ln(1+\frac{1}{u})^u +
\ln(1+\frac{1}{u})^u \ln(1+\frac{1}{u}) = 1$.
Hence it suffices to prove that $q'(u)
= (2u+1) \ln^2\left(1 + \frac{1}{u}\right) - 2\ln\left(1+\frac{1}{u}\right) \geq 0$, $u \geq 0$.
Using the fact that $\ln(1+\frac{1}{u}) >0$ for all $u > 0$,
the non-negativity of $q'(u)$ for $u \geq 0$ becomes equivalent to
that the second numerator in \eqref{eq:hessianelement} is negative,
which is proven in the proof of Theorem \ref{theo:concave}, and the
result follows.
\end{proof}

From Lemma \ref{theo:under}, one can expect that the load-coupling system
\eqref{eq:rho} has a solution, only if a solution exists to
${\boldsymbol \rho} = {\boldsymbol h}^0({\boldsymbol
\rho})$. The following theorem formalizes this necessary condition, and
establishes the result that ${\boldsymbol \rho}_{\boldsymbol h}^0$
bounds ${\boldsymbol \rho}^*$ from below.

\begin{theorem}
\label{theo:necessary}
If $S_{{\boldsymbol \rho} = {\boldsymbol f} ({\boldsymbol \rho})} \not= \emptyset$,
then $S_{{\boldsymbol \rho} = {\boldsymbol h}^0 ({\boldsymbol \rho})} \not= \emptyset$
and ${\boldsymbol \rho}_{\boldsymbol h}^0 \leq {\boldsymbol \rho}^*$.
\end{theorem}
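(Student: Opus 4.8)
The plan is to use the known fixed point $\boldsymbol\rho^*$ of $\boldsymbol f$ as a super-solution for the linear map $\boldsymbol h^0$, and then run a monotone iteration down to a fixed point of $\boldsymbol h^0$ that stays below $\boldsymbol\rho^*$. By hypothesis $S_{\boldsymbol\rho = \boldsymbol f(\boldsymbol\rho)}\neq\emptyset$, so Theorem~\ref{theo:uniqueness} supplies the unique fixed point $\boldsymbol\rho^* = \boldsymbol f(\boldsymbol\rho^*)$, and since $\boldsymbol f$ is increasing with $\boldsymbol f(0)>0$ we have $\boldsymbol\rho^* = \boldsymbol f(\boldsymbol\rho^*)\ge \boldsymbol f(0)>0$. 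Applying Lemma~\ref{theo:under} at the point $\boldsymbol\rho^*$ gives $\boldsymbol h^0(\boldsymbol\rho^*)\le \boldsymbol f(\boldsymbol\rho^*) = \boldsymbol\rho^*$; that is, $\boldsymbol\rho^*$ lies in the feasible region $S_{\boldsymbol\rho\ge\boldsymbol h^0(\boldsymbol\rho)}$ of the linear inequality system and is, in particular, a super-solution of $\boldsymbol h^0$.

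Next I would set $\boldsymbol\rho^{(0)} = \boldsymbol\rho^*$ and $\boldsymbol\rho^{(t+1)} = \boldsymbol h^0(\boldsymbol\rho^{(t)})$, and establish two monotonicity facts. First, because the off-diagonal entries of the matrix defining $\boldsymbol h^0$ are strictly positive, $\boldsymbol h^0$ is increasing on $\mathbb R_+^n$; combined with the base inequality $\boldsymbol h^0(\boldsymbol\rho^*)\le\boldsymbol\rho^*$, induction yields $\boldsymbol\rho^{(t+1)} = \boldsymbol h^0(\boldsymbol\rho^{(t)})\le \boldsymbol h^0(\boldsymbol\rho^{(t-1)}) = \boldsymbol\rho^{(t)}$, so the sequence is componentwise non-increasing and bounded above by $\boldsymbol\rho^*$. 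Second, for every $\boldsymbol\rho\ge 0$ we have $\boldsymbol h^0(\boldsymbol\rho)\ge \boldsymbol h^0(0) = \boldsymbol f(0)>0$, hence $\boldsymbol\rho^{(t)}\ge \boldsymbol f(0)>0$ for all $t\ge 1$. Thus each coordinate of $\boldsymbol\rho^{(t)}$ is a non-increasing sequence bounded below, so the sequence converges to a limit $\boldsymbol\rho^\infty$ with $\boldsymbol f(0)\le\boldsymbol\rho^\infty\le\boldsymbol\rho^*$. Since $\boldsymbol h^0$ is continuous (indeed affine), passing to the limit in $\boldsymbol\rho^{(t+1)} = \boldsymbol h^0(\boldsymbol\rho^{(t)})$ gives $\boldsymbol\rho^\infty = \boldsymbol h^0(\boldsymbol\rho^\infty)$. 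Hence $S_{\boldsymbol\rho = \boldsymbol h^0(\boldsymbol\rho)}\neq\emptyset$, and by the uniqueness of the linear solution (noted just before the statement) its unique element $\boldsymbol\rho_{\boldsymbol h}^0$ equals $\boldsymbol\rho^\infty$, so $\boldsymbol\rho_{\boldsymbol h}^0\le\boldsymbol\rho^*$.

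An equivalent route, more in the spirit of Theorem~\ref{theo:optsolution}, is to minimize $\sum_{i\in\CN}\rho_i$ over the set $\bar S = S_{\boldsymbol\rho\ge\boldsymbol h^0(\boldsymbol\rho)}\cap\{\boldsymbol\rho\le\boldsymbol\rho^*\}$. This set is nonempty because it contains $\boldsymbol\rho^*$ (by the first paragraph), and it is closed (as $\boldsymbol h^0$ is continuous) and bounded, hence compact; Weierstrass' theorem then yields an optimum $\boldsymbol\rho^\circ\le\boldsymbol\rho^*$. Because $\boldsymbol h^0$ is strictly increasing in the load of the other cells, the constraint $\boldsymbol\rho\ge\boldsymbol h^0(\boldsymbol\rho)$ must hold with equality at $\boldsymbol\rho^\circ$ — any coordinate with slack could be decreased, lowering the objective without violating feasibility, contradicting optimality — so $\boldsymbol\rho^\circ$ solves the linear equation system and, by uniqueness, equals $\boldsymbol\rho_{\boldsymbol h}^0$, which is $\le\boldsymbol\rho^*$.

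I do not anticipate a serious obstacle. The only steps requiring care are seeding the descent induction with Lemma~\ref{theo:under} (which is exactly what keeps the iterates below $\boldsymbol\rho^*$) and invoking $\boldsymbol f(0)>0$ to keep the iterates bounded below, so that the monotone limit exists and remains in the non-negative orthant where $\boldsymbol h^0$ was defined. In the optimization variant, the analogous delicate point is the standard active-constraint-at-the-optimum argument, which the paper has already employed for \eqref{eq:optrho}.
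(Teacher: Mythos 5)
Your argument is correct, and it essentially contains two proofs. Your second route is the paper's own proof in all but name: the paper minimizes $\sum_{i\in\CN}\rho_i$ over $S_{{\boldsymbol \rho} \geq {\boldsymbol h}^0({\boldsymbol \rho})}$ (formulation \eqref{eq:linearrho}), uses Lemma~\ref{theo:under} at ${\boldsymbol \rho}^*$ to exhibit ${\boldsymbol \rho}^*$ as a feasible point, notes that any optimum satisfies the constraint with equality and that the linear fixed point is unique, and then caps the feasible set by ${\boldsymbol \rho} \leq {\boldsymbol \rho}^*$ to get ${\boldsymbol \rho}_{\boldsymbol h}^0 \leq {\boldsymbol \rho}^*$ --- exactly your Weierstrass/active-constraint variant. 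Your primary route is genuinely different and equally valid: starting from the super-solution ${\boldsymbol h}^0({\boldsymbol \rho}^*) \leq {\boldsymbol f}({\boldsymbol \rho}^*) = {\boldsymbol \rho}^*$ (again Lemma~\ref{theo:under}, the same key ingredient), you exploit the monotonicity of the affine map ${\boldsymbol h}^0$ (nonnegative matrix, zero diagonal) to run the iteration ${\boldsymbol \rho}^{(t+1)} = {\boldsymbol h}^0({\boldsymbol \rho}^{(t)})$ downward from ${\boldsymbol \rho}^*$; the iterates are componentwise non-increasing and bounded below by ${\boldsymbol f}(0) > 0$, so they converge, and continuity gives a fixed point below ${\boldsymbol \rho}^*$, identified with ${\boldsymbol \rho}_{\boldsymbol h}^0$ via the uniqueness remark made just before the lemma (which the paper's LP proof also relies on). The monotone-iteration argument is constructive and more elementary --- it avoids LP/Weierstrass machinery and even suggests a numerical scheme for computing the lower bound --- while the paper's optimization route fits the template it has already set up in Theorem~\ref{theo:optsolution} and reuses the equality-at-optimum observation. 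Both hinge on the same two facts, ${\boldsymbol h}^0 \leq {\boldsymbol f}$ and uniqueness of the linear solution, so neither is missing anything the other needs.
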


\begin{proof}
Consider the following linear programming (LP) formulation.

\vspace{-4mm}
\begin{subequations}
\label{eq:linearrho}
\begin{align}
\min~~ & \sum_{i \in {\cal N}} \rho_i \label{eq:linearrhoobj}\\
& {\boldsymbol \rho} \geq  {\boldsymbol h}^0({\boldsymbol \rho}) \label{eq:linearrhocons} \\
& {\boldsymbol \rho} \in  \mathbb{R}_+^n
\end{align}
\end{subequations}

Similar to the result in Theorem \ref{theo:optsolution}, it can be
easily proven that \eqref{eq:linearrho} has an optimal solution if there
exists any ${\boldsymbol \rho} \geq 0$ satisfying
\eqref{eq:linearrhocons}. In addition,
it is clear that any optimum is in $S_{{\boldsymbol \rho} =
{\boldsymbol h}^0({\boldsymbol \rho})}$, and $S_{{\boldsymbol \rho} =
{\boldsymbol h}^0({\boldsymbol \rho})}$ is either empty or a
singleton.  Consider ${\boldsymbol
\rho}^*$. By Lemma \ref{theo:under}, ${\boldsymbol h}^0({\boldsymbol
\rho}^*) \leq {\boldsymbol f}({\boldsymbol \rho}^*) = {\boldsymbol
\rho}^*$.  Hence ${\boldsymbol \rho}^*$ is a feasible solution
to \eqref{eq:linearrho}. It follows then $S_{{\boldsymbol \rho} =
{\boldsymbol h}^0({\boldsymbol \rho})} \not= \emptyset$.
Furthermore, \eqref{eq:linearrho} obviously remains feasible with the
additional constraint ${\boldsymbol \rho} \leq {\boldsymbol \rho}^*$.
Since the LP optimum is unique and equal to ${\boldsymbol
\rho}_{\boldsymbol h}^0$, ${\boldsymbol \rho}_{\boldsymbol h}^0 \leq
{\boldsymbol \rho}^*$.
\end{proof}

By Theorem \ref{theo:necessary}, the linear system ${\boldsymbol \rho}
= {\boldsymbol h}^0 ({\boldsymbol \rho})$ is potentially useful for
detecting infeasibility.  If the linear system is infeasible, then it
is not meaningful to attempt to solve \eqref{eq:rho}.  In addition, if
feasibility holds for
${\boldsymbol \rho} = {\boldsymbol h}^0 ({\boldsymbol \rho})$, the solution provides a
lower bound to the true load values. Thus having ${\boldsymbol
\rho}_{\boldsymbol h}^0$ close to one indicates an overloaded network,
and its corresponding configuration can be discarded from further
consideration in network planning, without the need of solving the
non-linear system \eqref{eq:rho}.

In Figure \ref{fig:twocell}, the
lines with markers represent the linear function ${\boldsymbol h}^0$.  In the first
two cases, ${\boldsymbol \rho}^*$ exists, and solving the linear system leads to a
lower bound ${\boldsymbol \rho}_{\boldsymbol h}^0$ (i.e., the intersection point of the lines) of
${\boldsymbol \rho}^*$. In the last case, the linear system has no solution, and
consequently $S_{{\boldsymbol \rho} = {\boldsymbol f} ({\boldsymbol
\rho})} = \emptyset$.

Thus far, it has become clear that ${\boldsymbol \rho} =
{\boldsymbol h}^0 ({\boldsymbol \rho})$ provides an optimistic view
of the cell load. We are able to prove a slightly unexpected but much stronger
result. The linear equations ${\boldsymbol \rho} = {\boldsymbol h}^0 ({\boldsymbol
\rho})$, in fact, give an exact characterization of solution existence of the
load-coupling system. Namely, that ${\boldsymbol \rho} = {\boldsymbol
h}^0 ({\boldsymbol \rho})$ has a solution is not only a necessary, but
also a sufficient condition for the feasibility of \eqref{eq:rho}.

The intuition of the sufficiency result is as follows. Consider Figures
\ref{fig:case1}-\ref{fig:case2}, for which the linear equation system
has solution. Suppose the slopes of the lines are increased slightly.
Intuitively, if the increase is sufficiently small, the new linear
system will remain feasible. Also, the figure gives the hint that the
modified linear function will eventually go above the non-linear load
function for large load, indicating  $S_{{\boldsymbol \rho} = {\boldsymbol f} ({\boldsymbol \rho})}
\not= \emptyset$. To rigorously prove the result, we define the linear
equation system ${\boldsymbol \rho} = {\boldsymbol
h}^{\epsilon}({\boldsymbol \rho})$, obtained by increasing the slope
coefficients of ${\boldsymbol h}^0$ by a positive constant
$\epsilon$. That is, ${\boldsymbol h}^{\epsilon}$ denotes the case of
\eqref{eq:linear} where $H_{ik} = \ln(2) \sum_{j
\in
\CJ_i} b_{ikj} / a_j + \epsilon$, ${\boldsymbol {\widehat \rho}}=0$.

\begin{lemma}
\label{theo:epsilonexistence}
If $S_{{\boldsymbol \rho} = {\boldsymbol h}^0 ({\boldsymbol \rho})}
\not= \emptyset$, i.e., ${\boldsymbol \rho}_{\boldsymbol h}^0$ exists,
then there exists $\epsilon > 0$ such that
$S_{{\boldsymbol \rho} = {\boldsymbol h}^{\epsilon} ({\boldsymbol
\rho})} \not=
\emptyset$.
\end{lemma}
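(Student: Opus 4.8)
The plan is to exploit that $\boldsymbol{h}^0$ and $\boldsymbol{h}^{\epsilon}$ are \emph{affine} maps that differ only by a small perturbation of their coefficient matrix. Write $\boldsymbol{h}^0(\boldsymbol{\rho}) = \boldsymbol{H}^0\boldsymbol{\rho} + \boldsymbol{f}(0)$, where $\boldsymbol{H}^0$ is entrywise non-negative (zero diagonal, strictly positive off-diagonal), and $\boldsymbol{h}^{\epsilon}(\boldsymbol{\rho}) = \boldsymbol{H}^{\epsilon}\boldsymbol{\rho} + \boldsymbol{f}(0)$ with $\boldsymbol{H}^{\epsilon} = \boldsymbol{H}^0 + \epsilon\boldsymbol{E}$, $\boldsymbol{E}$ being the matrix with zero diagonal and unit off-diagonal entries; in particular $\epsilon \mapsto \boldsymbol{H}^{\epsilon}$ is (linearly, hence) continuous. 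The strategy is to show that the hypothesis $S_{\boldsymbol{\rho} = \boldsymbol{h}^0(\boldsymbol{\rho})} \neq \emptyset$ already forces the unperturbed system into a non-degenerate regime, namely $\boldsymbol{I} - \boldsymbol{H}^0$ invertible and the solution strictly positive, so that a sufficiently small increase $\epsilon$ of the slope coefficients moves the solution only slightly and thus keeps it non-negative. This makes rigorous the intuition, visible in Figures \ref{fig:case1}--\ref{fig:case2}, that a slight steepening of the lines keeps their intersection in the first quadrant.

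First I would record that the assumed solution $\boldsymbol{\rho}_{\boldsymbol{h}}^0$ is in fact \emph{strictly} positive: from $\boldsymbol{\rho}_{\boldsymbol{h}}^0 = \boldsymbol{H}^0\boldsymbol{\rho}_{\boldsymbol{h}}^0 + \boldsymbol{f}(0)$ together with $\boldsymbol{H}^0\boldsymbol{\rho}_{\boldsymbol{h}}^0 \geq 0$ and $\boldsymbol{f}(0) > 0$ we get $\boldsymbol{\rho}_{\boldsymbol{h}}^0 \geq \boldsymbol{f}(0) > 0$, and therefore $\boldsymbol{H}^0\boldsymbol{\rho}_{\boldsymbol{h}}^0 = \boldsymbol{\rho}_{\boldsymbol{h}}^0 - \boldsymbol{f}(0) < \boldsymbol{\rho}_{\boldsymbol{h}}^0$ componentwise. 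A non-negative matrix that maps some strictly positive vector strictly below itself has spectral radius $r(\boldsymbol{H}^0) < 1$: with $\theta := \max_i ({\boldsymbol{H}}^0\boldsymbol{\rho}_{\boldsymbol{h}}^0)_i/(\boldsymbol{\rho}_{\boldsymbol{h}}^0)_i < 1$ one gets $(\boldsymbol{H}^0)^k\boldsymbol{\rho}_{\boldsymbol{h}}^0 \leq \theta^k\boldsymbol{\rho}_{\boldsymbol{h}}^0 \to 0$ and, by non-negativity and positivity of $\boldsymbol{\rho}_{\boldsymbol{h}}^0$, $(\boldsymbol{H}^0)^k \to 0$ entrywise. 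Consequently $\boldsymbol{I} - \boldsymbol{H}^0$ is invertible with non-negative inverse $\sum_{k \geq 0}(\boldsymbol{H}^0)^k$, and $\boldsymbol{\rho}_{\boldsymbol{h}}^0 = (\boldsymbol{I} - \boldsymbol{H}^0)^{-1}\boldsymbol{f}(0)$; incidentally this re-derives the uniqueness of $\boldsymbol{\rho}_{\boldsymbol{h}}^0$ noted earlier.

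Next I would invoke continuity. Since $\boldsymbol{I} - \boldsymbol{H}^0$ is invertible, $\boldsymbol{I} - \boldsymbol{H}^{\epsilon}$ is invertible for all sufficiently small $\epsilon > 0$ and $(\boldsymbol{I} - \boldsymbol{H}^{\epsilon})^{-1} \to (\boldsymbol{I} - \boldsymbol{H}^0)^{-1}$ as $\epsilon \to 0^{+}$; hence $\boldsymbol{\rho}^{\epsilon} := (\boldsymbol{I} - \boldsymbol{H}^{\epsilon})^{-1}\boldsymbol{f}(0) \to \boldsymbol{\rho}_{\boldsymbol{h}}^0 > 0$. Because the limit is strictly positive, $\boldsymbol{\rho}^{\epsilon} \geq 0$ (indeed $> 0$) for all small enough $\epsilon > 0$, and by construction $\boldsymbol{\rho}^{\epsilon} = \boldsymbol{h}^{\epsilon}(\boldsymbol{\rho}^{\epsilon})$, so $\boldsymbol{\rho}^{\epsilon} \in S_{\boldsymbol{\rho} = \boldsymbol{h}^{\epsilon}(\boldsymbol{\rho})}$ and the claim follows. (Equivalently: continuity of the spectral radius in the matrix entries gives $r(\boldsymbol{H}^{\epsilon}) < 1$ for small $\epsilon$, whence $(\boldsymbol{I} - \boldsymbol{H}^{\epsilon})^{-1} = \sum_{k \geq 0}(\boldsymbol{H}^{\epsilon})^k \geq 0$ and $\boldsymbol{\rho}^{\epsilon} \geq 0$ outright.) I expect the only genuinely delicate point to be the first step — ensuring that mere solvability of the linear system forces $\boldsymbol{I} - \boldsymbol{H}^0$ to be nonsingular, so that the perturbed solution varies continuously rather than jumping; this is exactly where $\boldsymbol{f}(0) > 0$ is essential, since it upgrades ``$\boldsymbol{\rho}_{\boldsymbol{h}}^0 \geq 0$'' to ``$\boldsymbol{\rho}_{\boldsymbol{h}}^0 > 0$ with $\boldsymbol{H}^0\boldsymbol{\rho}_{\boldsymbol{h}}^0 < \boldsymbol{\rho}_{\boldsymbol{h}}^0$.'' Everything else is routine continuity and Neumann-series bookkeeping.
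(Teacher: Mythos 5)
Your proof is correct, but it takes a genuinely different route from the paper. The paper argues geometrically within the machinery it has already set up: it notes that $\lambda {\boldsymbol \rho}_{\boldsymbol h}^0$ with $\lambda>1$ is an interior point ${\tilde{\boldsymbol\rho}}$ of $S_{{\boldsymbol \rho} \geq {\boldsymbol h}^0({\boldsymbol \rho})}$ (the slack being $(\lambda-1){\boldsymbol f}(0)>0$), then chooses $\epsilon = \min_i \bigl({\tilde\rho}_i - h_i^0({\tilde{\boldsymbol\rho}})\bigr)/\sum_{k\neq i}{\tilde\rho}_k$ so that ${\tilde{\boldsymbol\rho}}$ still satisfies ${\tilde{\boldsymbol\rho}} \geq {\boldsymbol h}^{\epsilon}({\tilde{\boldsymbol\rho}})$, and finally passes from feasibility of the inequality system to solvability of the equality system via the same LP/monotonicity argument used for ${\boldsymbol h}^0$ in Theorem \ref{theo:necessary}. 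You instead go through linear algebra: from ${\boldsymbol H}^0{\boldsymbol\rho}_{\boldsymbol h}^0 < {\boldsymbol\rho}_{\boldsymbol h}^0$ with ${\boldsymbol\rho}_{\boldsymbol h}^0>0$ you deduce the spectral radius bound $r({\boldsymbol H}^0)<1$, hence ${\boldsymbol I}-{\boldsymbol H}^0$ is invertible with non-negative Neumann-series inverse, and a perturbation/continuity argument (or continuity of the spectral radius) gives a non-negative fixed point $({\boldsymbol I}-{\boldsymbol H}^{\epsilon})^{-1}{\boldsymbol f}(0)$ of ${\boldsymbol h}^{\epsilon}$ for small $\epsilon>0$. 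All steps check out: the strict positivity ${\boldsymbol\rho}_{\boldsymbol h}^0 \geq {\boldsymbol f}(0) > 0$, the sub-invariance argument for $r({\boldsymbol H}^0)<1$, and the openness of the set of invertible matrices are each sound, and non-negativity of the perturbed solution is secured either by convergence to a strictly positive limit or by the non-negative Neumann series. What your approach buys is an explicit closed-form solution, a self-contained construction that does not lean on the implicit ``inequality feasibility implies equality solvability'' step, and a spectral-radius characterization that connects nicely to the paper's later remark that the feasibility boundary is where $\det({\boldsymbol I}-{\boldsymbol H})=0$; what the paper's argument buys is brevity and the avoidance of spectral theory, reusing only the elementary scaling and LP tools already established.
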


\begin{proof}
First, note that $S_{{\boldsymbol \rho} \geq {\boldsymbol h}^0 ({\boldsymbol
\rho})}$ has a non-empty interior. In particular, it is
easily verified that $\lambda {\boldsymbol \rho}_{\boldsymbol h}^0$ is
an interior point for any $\lambda > 1$.  Denote by ${\boldsymbol
{\tilde
\rho}}$ such a point, that is, ${\tilde \rho}_i >
h_i^0({\boldsymbol {\tilde \rho}}), i\in \CN$.
Letting $\epsilon_i = \frac{{\tilde \rho}_i - h_i^0({\boldsymbol {\tilde
\rho}})}{\sum_{k \in \CN \setminus \{i\}} {\tilde \rho}_k}$, ${\tilde \rho}_i =
h_i^0({\boldsymbol {\tilde \rho}}) + \epsilon_i {\sum_{k \in \CN \setminus \{i\}} {\tilde \rho}_k}, i \in \CN$. Next, set $\epsilon =
\min_{i \in \CN} \epsilon_i$. Then ${\tilde \rho}_i
\geq h_i^0({\boldsymbol {\tilde \rho}}) + \epsilon {\sum_{k \in \CN \setminus \{i\}} {\tilde \rho}_k}, i \in \CN$. Thus for this value of $\epsilon$,
${\tilde {\boldsymbol \rho}} \in S_{{\boldsymbol \rho} \geq
{\boldsymbol h}^{\epsilon} ({\boldsymbol \rho})}$, and the result
follows.
\end{proof}

\begin{lemma}
\label{theo:epsiloninfinity}
Consider any ${\boldsymbol {\bar \rho}} > 0$ and any $\epsilon > 0$.
Denote by $\lambda$ a positive number.  For any $i \in \CN$,
$\displaystyle \lim_{\lambda \rightarrow
\infty} [h_i^\epsilon(\lambda {\boldsymbol {\bar \rho}})
- f_i(\lambda {\boldsymbol {\bar \rho}})] = \infty$.
\end{lemma}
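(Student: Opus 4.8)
The plan is to regard both sides as functions of the single scalar $\lambda$ and to show that the linear side eventually outgrows the smooth one at a strictly positive asymptotic rate. Set $g(\lambda) := f_i(\lambda{\boldsymbol{\bar\rho}})$. Since ${\boldsymbol{\widehat\rho}} = 0$ and $H_{ik} = \ln(2)\sum_{j\in\CJ_i} b_{ikj}/a_j + \epsilon$ for $k \neq i$, the linear side is affine in $\lambda$:
\[
  h_i^\epsilon(\lambda{\boldsymbol{\bar\rho}}) = \lambda\bigl(L + \delta\bigr) + f_i(0),
  \qquad
  L := \sum_{k\in\CN\setminus\{i\}} \bar\rho_k\,\ln(2)\!\sum_{j\in\CJ_i}\frac{b_{ikj}}{a_j},
  \qquad
  \delta := \epsilon\!\sum_{k\in\CN\setminus\{i\}}\bar\rho_k ,
\]
with constant slope $L + \delta$, and $\delta > 0$ because ${\boldsymbol{\bar\rho}} > 0$ and $\epsilon > 0$.

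Next I would differentiate $g$. By the chain rule $g'(\lambda) = \sum_{k\in\CN\setminus\{i\}} \frac{\partial f_i}{\partial \rho_k}(\lambda{\boldsymbol{\bar\rho}})\,\bar\rho_k$. Along the ray $\lambda{\boldsymbol{\bar\rho}}$, every quantity $u = \sum_{h\in\CN\setminus\{i\}} b_{ihj}\lambda\bar\rho_h + c_{ij}$ appearing in \eqref{eq:partial} tends to $+\infty$ as $\lambda\to\infty$, because ${\boldsymbol{\bar\rho}} > 0$ and the coefficients $b_{ihj}$ are strictly positive. The proof of Theorem~\ref{theo:partiallimit} establishes that each pixel-$j$ summand of \eqref{eq:partial} tends to $\ln(2)\,b_{ikj}/a_j$ \emph{whenever} $u\to\infty$, so it applies here verbatim and gives $\lim_{\lambda\to\infty}\frac{\partial f_i}{\partial\rho_k}(\lambda{\boldsymbol{\bar\rho}}) = \ln(2)\sum_{j\in\CJ_i} b_{ikj}/a_j$; summing over $k$ yields $\lim_{\lambda\to\infty} g'(\lambda) = L$. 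Hence $\frac{d}{d\lambda}\bigl[h_i^\epsilon(\lambda{\boldsymbol{\bar\rho}}) - g(\lambda)\bigr] = (L+\delta) - g'(\lambda) \to \delta > 0$.

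Finally I would convert this into the stated divergence: pick $\lambda_0$ large enough that $(L+\delta) - g'(t) \ge \delta/2$ for all $t \ge \lambda_0$; then the fundamental theorem of calculus gives $h_i^\epsilon(\lambda{\boldsymbol{\bar\rho}}) - f_i(\lambda{\boldsymbol{\bar\rho}}) \ge \bigl[h_i^\epsilon(\lambda_0{\boldsymbol{\bar\rho}}) - f_i(\lambda_0{\boldsymbol{\bar\rho}})\bigr] + \tfrac{\delta}{2}(\lambda - \lambda_0)$ for $\lambda \ge \lambda_0$, whose right-hand side diverges to $\infty$. The main obstacle is the innocuous-looking but necessary point that Theorem~\ref{theo:partiallimit}, stated for a single coordinate $\rho_k\to\infty$, may be invoked along a jointly scaled vector; I would dispatch this as above, by noting that the proof of that theorem hinges only on $u\to\infty$. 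As a cross-check (and an alternative route avoiding derivatives), substituting the expansion $1/\log_2(1+1/u) = \ln(2)\,(u + \tfrac{1}{2} + O(1/u))$ into \eqref{eq:rhois} gives $f_i(\lambda{\boldsymbol{\bar\rho}}) = \lambda L + O(1)$, so that $h_i^\epsilon(\lambda{\boldsymbol{\bar\rho}}) - f_i(\lambda{\boldsymbol{\bar\rho}}) = \lambda\delta + O(1) \to \infty$.
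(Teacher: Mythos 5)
Your proof is correct, but your main argument takes a different route from the paper's. The paper proves the lemma by directly rewriting $h_i^\epsilon(\lambda {\boldsymbol {\bar \rho}}) - f_i(\lambda {\boldsymbol {\bar \rho}})$ as a sum, per pixel $j$, of bracketed terms of the form $\sum_{k \in \CN \setminus \{i\}} b_{ikj}\bar\rho_k \lambda - 1/\ln\bigl(1+\tfrac{1}{u_j(\lambda)}\bigr)$ plus a term that is linear in $\lambda$, and then showing via repeated l'H{\^o}pital that each bracket converges to the finite constant $-\tfrac{1}{2}-c_{ij}$, so the linear term forces divergence. You instead work with the derivative along the ray: you observe that the proof of Theorem~\ref{theo:partiallimit} uses only $u \rightarrow \infty$, hence applies along $\lambda{\boldsymbol{\bar\rho}}$ with ${\boldsymbol{\bar\rho}}>0$, conclude $\frac{d}{d\lambda} f_i(\lambda{\boldsymbol{\bar\rho}}) \rightarrow L$ while the affine side has constant slope $L+\delta$ with $\delta = \epsilon \sum_{k \in \CN\setminus\{i\}}\bar\rho_k > 0$, and integrate the eventual slope gap $\geq \delta/2$ to get divergence. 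This is a valid and genuinely different decomposition; its advantage is that it recycles Theorem~\ref{theo:partiallimit} (modulo the explicit remark, which you make, that the theorem's argument is really about $u\rightarrow\infty$ rather than a single coordinate), at the cost of an extra differentiation-plus-FTC step. Your closing cross-check, expanding $1/\log_2(1+\tfrac{1}{u}) = \ln(2)\,(u+\tfrac12+O(1/u))$ so that $f_i(\lambda{\boldsymbol{\bar\rho}}) = \lambda L + O(1)$, is essentially the paper's own argument with a series expansion in place of l'H{\^o}pital, and either version alone would suffice.
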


\begin{proof}
Consider the definitions of $h_i^\epsilon(\lambda {\boldsymbol {\bar \rho}})$ and $f_i(\lambda {\boldsymbol {\bar \rho}})$. After some straightforward re-writing and
ignoring the constant term ${\boldsymbol f}(0)$ in
$h_i^\epsilon(\lambda {\boldsymbol {\bar \rho}})$,
the difference between the two functions has
the following form.
\begin{eqnarray}
&
\displaystyle \sum_{j \in \CJ_i} \frac{\ln(2)}{a_j}\left[
\displaystyle \sum_{k \in \CN \setminus \{i\}} b_{ikj}{\bar \rho}_k \lambda - \frac{1}{\ln\left(1 +
\frac{1}{\sum_{k \in \CN \setminus \{i\}}b_{ikj} {\bar \rho}_k \lambda + c_{ij}}\right)}\right] + \left( \displaystyle \sum_{k \in \CN \setminus \{i\}} \sum_{j \in \CJ_i}
\frac{\ln(2) b_{ikj}}{a_j} {\bar \rho}_k \right) \lambda &  \label{eq:diff}
\end{eqnarray}

Let $q(\lambda)$ denote the expression in the square brackets of
\eqref{eq:diff}. By repeatedly using l'H{\^ o}pital's rule, one
can show that $\lim_{\lambda \rightarrow \infty} q(\lambda) =
-\frac{1}{2} - c_{ij}$, which is a constant. Observing that the last term
in \eqref{eq:diff} grows linearly in $\lambda$, the lemma follows.
\end{proof}

\begin{theorem}
\label{theo:sufficient}
If $S_{{\boldsymbol \rho} = {\boldsymbol h}^0 ({\boldsymbol \rho})}
\not= \emptyset$, then $S_{{\boldsymbol \rho} = {\boldsymbol f}
({\boldsymbol \rho})} \not= \emptyset$.
\end{theorem}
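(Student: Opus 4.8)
The plan is to use the two preceding lemmas (Lemma~\ref{theo:epsilonexistence} and Lemma~\ref{theo:epsiloninfinity}) to build an explicit point in the feasible load region $S_{{\boldsymbol \rho} \geq {\boldsymbol f}({\boldsymbol \rho})}$, and then invoke Corollary~\ref{theo:opttoexistence} to conclude that $S_{{\boldsymbol \rho} = {\boldsymbol f}({\boldsymbol \rho})} \not= \emptyset$. Concretely: by hypothesis ${\boldsymbol \rho}_{\boldsymbol h}^0$ exists; by Lemma~\ref{theo:epsilonexistence}, there is $\epsilon > 0$ for which $S_{{\boldsymbol \rho} = {\boldsymbol h}^{\epsilon}({\boldsymbol \rho})} \not= \emptyset$, say with solution ${\boldsymbol \rho}_{\boldsymbol h}^{\epsilon} > 0$. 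The idea is then to scale ${\boldsymbol \rho}_{\boldsymbol h}^{\epsilon}$ up by a large factor $\lambda$ and show that $\lambda {\boldsymbol \rho}_{\boldsymbol h}^{\epsilon}$ lies in $S_{{\boldsymbol \rho} \geq {\boldsymbol f}({\boldsymbol \rho})}$.

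First I would record the behaviour of ${\boldsymbol h}^{\epsilon}$ under scaling. Since ${\boldsymbol h}^{\epsilon}$ is affine with ${\boldsymbol h}^{\epsilon}({\boldsymbol \rho}) = {\boldsymbol H}^{\epsilon}{\boldsymbol \rho} + {\boldsymbol f}(0)$ and ${\boldsymbol f}(0) > 0$, for the fixed point we have $\lambda {\boldsymbol \rho}_{\boldsymbol h}^{\epsilon} = \lambda {\boldsymbol H}^{\epsilon}{\boldsymbol \rho}_{\boldsymbol h}^{\epsilon} + \lambda {\boldsymbol f}(0) = {\boldsymbol H}^{\epsilon}(\lambda{\boldsymbol \rho}_{\boldsymbol h}^{\epsilon}) + \lambda{\boldsymbol f}(0) \geq {\boldsymbol H}^{\epsilon}(\lambda{\boldsymbol \rho}_{\boldsymbol h}^{\epsilon}) + {\boldsymbol f}(0) = {\boldsymbol h}^{\epsilon}(\lambda{\boldsymbol \rho}_{\boldsymbol h}^{\epsilon})$ for every $\lambda \geq 1$; that is, $\lambda{\boldsymbol \rho}_{\boldsymbol h}^{\epsilon} \in S_{{\boldsymbol \rho} \geq {\boldsymbol h}^{\epsilon}({\boldsymbol \rho})}$ for all $\lambda \geq 1$. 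Next, apply Lemma~\ref{theo:epsiloninfinity} with ${\boldsymbol {\bar \rho}} = {\boldsymbol \rho}_{\boldsymbol h}^{\epsilon} > 0$ and this $\epsilon$: for each $i \in \CN$, $h_i^{\epsilon}(\lambda{\boldsymbol \rho}_{\boldsymbol h}^{\epsilon}) - f_i(\lambda{\boldsymbol \rho}_{\boldsymbol h}^{\epsilon}) \to \infty$ as $\lambda \to \infty$, so there is $\lambda_i$ beyond which this difference is positive; taking $\lambda^* = \max\{1, \lambda_1, \dots, \lambda_n\}$ gives $h_i^{\epsilon}(\lambda^*{\boldsymbol \rho}_{\boldsymbol h}^{\epsilon}) \geq f_i(\lambda^*{\boldsymbol \rho}_{\boldsymbol h}^{\epsilon})$ for all $i$. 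Combining the two displayed facts, $\lambda^*{\boldsymbol \rho}_{\boldsymbol h}^{\epsilon} \geq {\boldsymbol h}^{\epsilon}(\lambda^*{\boldsymbol \rho}_{\boldsymbol h}^{\epsilon}) \geq {\boldsymbol f}(\lambda^*{\boldsymbol \rho}_{\boldsymbol h}^{\epsilon})$, so ${\boldsymbol {\bar \rho}} := \lambda^*{\boldsymbol \rho}_{\boldsymbol h}^{\epsilon}$ satisfies ${\boldsymbol {\bar \rho}} \geq {\boldsymbol f}({\boldsymbol {\bar \rho}})$, and ${\boldsymbol {\bar \rho}} > 0$ since ${\boldsymbol \rho}_{\boldsymbol h}^{\epsilon} > 0$ (its coordinates are at least $f_i(0) > 0$).

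Finally, having produced ${\boldsymbol {\bar \rho}} \geq {\boldsymbol f}({\boldsymbol {\bar \rho}}) > 0$, Corollary~\ref{theo:opttoexistence} immediately yields $S_{{\boldsymbol \rho} = {\boldsymbol f}({\boldsymbol \rho})} \not= \emptyset$, which is the claim. I do not expect a serious obstacle here, since the two lemmas do the heavy lifting; the one point needing a little care is the monotonic-scaling step for the affine map ${\boldsymbol h}^{\epsilon}$ — one must use $\lambda \geq 1$ together with ${\boldsymbol f}(0) > 0$ so that $\lambda{\boldsymbol f}(0) \geq {\boldsymbol f}(0)$ — and making sure the finitely many thresholds $\lambda_i$ from Lemma~\ref{theo:epsiloninfinity} can be taken simultaneously, which is immediate by taking their maximum. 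A cosmetic alternative to the scaling argument is to note directly that $S_{{\boldsymbol \rho} \geq {\boldsymbol h}^{\epsilon}({\boldsymbol \rho})}$, being the feasible region of an affine inequality system containing a point, is an unbounded polyhedron whose recession directions include ${\boldsymbol \rho}_{\boldsymbol h}^{\epsilon}$ itself, but the explicit computation above is cleaner and self-contained.
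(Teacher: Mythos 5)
Your proposal is correct and follows essentially the same route as the paper's own proof: existence of ${\boldsymbol \rho}_{\boldsymbol h}^{\epsilon}$ via Lemma~\ref{theo:epsilonexistence}, the scaling observation $\lambda {\boldsymbol \rho}_{\boldsymbol h}^{\epsilon} \geq {\boldsymbol h}^{\epsilon}(\lambda {\boldsymbol \rho}_{\boldsymbol h}^{\epsilon})$ for $\lambda \geq 1$, Lemma~\ref{theo:epsiloninfinity} to push ${\boldsymbol h}^{\epsilon}$ above ${\boldsymbol f}$ for large $\lambda$, and Corollary~\ref{theo:opttoexistence} to conclude. The only difference is that you spell out the steps the paper labels ``easily verified'' (the affine-scaling inequality using ${\boldsymbol f}(0)>0$, the uniform choice of $\bar\lambda$ over the finitely many cells, and positivity of ${\boldsymbol \rho}_{\boldsymbol h}^{\epsilon}$), which is fine.
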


\begin{proof}
By Lemma \ref{theo:epsilonexistence}, there exists $\epsilon > 0$ and
${\boldsymbol \rho}_{\boldsymbol h}^\epsilon$ satisfying ${\boldsymbol
\rho}_{\boldsymbol h}^\epsilon = {\boldsymbol h}^{\epsilon} ({\boldsymbol
\rho}_{\boldsymbol h}^\epsilon)$. It is easily verified that $\lambda
{\boldsymbol \rho}_{\boldsymbol h}^\epsilon \geq {\boldsymbol h}^{\epsilon}
(\lambda {\boldsymbol \rho}_{\boldsymbol h}^\epsilon), \lambda \geq 1$. Using Lemma
\ref{theo:epsiloninfinity}, there exists $\bar \lambda$ such that
${\bar \lambda} {\boldsymbol \rho}_{\boldsymbol h}^\epsilon \geq {\boldsymbol h}^{\epsilon}
({\bar \lambda} {\boldsymbol \rho}_{\boldsymbol h}^\epsilon) \geq
{\boldsymbol f}({\bar \lambda} {\boldsymbol \rho}_{\boldsymbol h}^\epsilon)$.
Therefore $S_{{\boldsymbol \rho} \geq {\boldsymbol f}
({\boldsymbol \rho})} \not= \emptyset$, and the result follows from
Corollary \ref{theo:opttoexistence}.
\end{proof}

Theorems \ref{theo:necessary} and \ref{theo:sufficient} together
provide a complete answer to the solution existence of LTE load
coupling, that is, whether or not the system has a fixed point
in $\mathbb{R}^n_+$ is equivalent to the feasibility of the linear equation system
${\boldsymbol \rho} = {\boldsymbol h}^0 ({\boldsymbol
\rho})$. Clearly, given an LTE network design, this feasibility check should be
performed first, before determining the load values. Furthermore, from
Theorem \ref{theo:necessary}, violating ${\boldsymbol
\rho}_{\boldsymbol h}^0\leq{1}$ is a simple indication of that
${\boldsymbol \rho}^*$ is beyond the network capacity.
For a two-cell example, the solution to the linear system ${\boldsymbol \rho} = {\boldsymbol h}^0 ({\boldsymbol
\rho})$ is
\begin{subequations}
\label{eq:twocellcapacity}
\begin{align}
& \rho_1=\frac{f_1({\bf 0})+f_2({\bf 0})\cdot H_{12}}{1-H_{21}H_{12}}, \\
& \rho_2=\frac{f_2({\bf 0})+f_1({\bf 0})\cdot H_{21}}{1-H_{21}H_{12}}.
\end{align}
\end{subequations}
%
%
%
%

With \eqref{eq:twocellcapacity}, a feasible solution exists when
$1-H_{21}H_{12}>0$, i.e., $H_{12}=\frac{1}{H_{21}}$ forms the (open)
boundary of the feasibility region in the two coefficients.
Note that $H_{12}$ and $H_{21}$ are linear in the traffic demands to
be satisfied in cell 1 and cell 2, respectively. The derived relation
representing the resource sharing trade-off for the two neighbor cells
in this example is well in line with the commonly known radio resource
sharing and capacity region concepts.

\section{Convex Optimization and Upper Bounding}
\label{sec:optimization}

Provided that $S_{{\boldsymbol \rho} = {\boldsymbol f} ({\boldsymbol
\rho})} \not= \emptyset$, a solution algorithm needs to be applied
to find ${\boldsymbol \rho}^*$. Solving ${\boldsymbol \rho} = {\boldsymbol
f} ({\boldsymbol \rho})$ is equivalent to finding the (unique) root of the $n$-dimensional function
${\boldsymbol \rho} - {\boldsymbol f} ({\boldsymbol
\rho})$. Thus one approach is to use the Newton-Raphson method.
In this section, we show that approaching ${\boldsymbol \rho}^*$ can
alternatively be viewed as the convex optimization problem formulated
below.
\begin{subequations}
\label{eq:maxrho}
\begin{align}
\max~~ & \sum_{i \in {\cal N}} \rho_i \label{eq:maxrhoobj}\\
& {\boldsymbol \rho} -  {\boldsymbol f}({\boldsymbol \rho}) \leq 0 \label{eq:maxrhocons} \\
& {\boldsymbol \rho} \in  \mathbb{R}_+^n
\end{align}
\end{subequations}

\begin{corollary}
\label{theo:maxrho}
Formulation \eqref{eq:maxrho} is a convex optimization problem,
and if $S_{{\boldsymbol \rho} = {\boldsymbol f} ({\boldsymbol \rho})}
\not= \emptyset$, then ${\boldsymbol \rho}^*$ is
the unique optimum to \eqref{eq:maxrho}.
\end{corollary}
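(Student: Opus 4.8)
The plan is to treat the two assertions separately. \emph{Convexity} is immediate from Theorem~\ref{theo:concave}: the objective \eqref{eq:maxrhoobj} is linear, and for each $i$ the constraint function $\rho_i - f_i({\boldsymbol\rho})$ is the sum of the affine term $\rho_i$ and the function $-f_i({\boldsymbol\rho})$, which is convex because $f_i$ is concave (Theorem~\ref{theo:concave}) and, not depending on $\rho_i$, is concave as a function on all of $\mathbb{R}^n_+$. Hence each inequality in \eqref{eq:maxrhocons} defines a sublevel set of a convex function, their intersection with the convex orthant $\mathbb{R}^n_+$ is convex, and \eqref{eq:maxrho} maximizes a linear objective over this convex set.

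For the optimality claim, the key step is the domination property: if $S_{{\boldsymbol\rho}={\boldsymbol f}({\boldsymbol\rho})}\neq\emptyset$, then every feasible point ${\boldsymbol{\bar\rho}}$ of \eqref{eq:maxrho} satisfies ${\boldsymbol{\bar\rho}}\leq{\boldsymbol\rho}^*$ componentwise. I would prove this by the scaling technique of Theorem~\ref{theo:uniqueness}, but now using the \emph{maximal} coordinate ratio. Assume for contradiction that $\bar\rho_i > \rho^*_i$ for some $i$. Since ${\boldsymbol\rho}^* = {\boldsymbol f}({\boldsymbol\rho}^*) \geq {\boldsymbol f}(\mathbf{0}) > 0$, set $\lambda = \max_{i\in\CN}\bar\rho_i/\rho^*_i > 1$ and pick $m$ attaining the maximum, so ${\boldsymbol{\bar\rho}} \leq \lambda{\boldsymbol\rho}^*$ and $\bar\rho_m = \lambda\rho^*_m$. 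Monotonicity of ${\boldsymbol f}$ gives $f_m({\boldsymbol{\bar\rho}}) \leq f_m(\lambda{\boldsymbol\rho}^*)$, and writing ${\boldsymbol\rho}^* = \frac{1}{\lambda}(\lambda{\boldsymbol\rho}^*) + (1-\frac{1}{\lambda})\mathbf{0}$ and applying concavity of $f_m$ together with $f_m(\mathbf{0}) > 0$ yields $f_m(\lambda{\boldsymbol\rho}^*) < \lambda f_m({\boldsymbol\rho}^*) = \lambda\rho^*_m$ (the ``outward'' counterpart of Corollary~\ref{theo:radiallyquasiconcave}). Combining, $\bar\rho_m \leq f_m({\boldsymbol{\bar\rho}}) \leq f_m(\lambda{\boldsymbol\rho}^*) < \lambda\rho^*_m = \bar\rho_m$, a contradiction; hence ${\boldsymbol{\bar\rho}} \leq {\boldsymbol\rho}^*$.

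With the domination property the conclusion follows: ${\boldsymbol\rho}^* \in \mathbb{R}^n_+$ and ${\boldsymbol\rho}^* - {\boldsymbol f}({\boldsymbol\rho}^*) = \mathbf{0} \leq \mathbf{0}$, so ${\boldsymbol\rho}^*$ is feasible; and any other feasible ${\boldsymbol{\bar\rho}}$ has $\bar\rho_i \leq \rho^*_i$ for every $i$ with strict inequality in at least one coordinate, whence $\sum_{i\in\CN}\bar\rho_i < \sum_{i\in\CN}\rho^*_i$. Thus ${\boldsymbol\rho}^*$ — well defined and unique by Theorem~\ref{theo:uniqueness} — is the unique optimum. (If a Weierstrass-style argument is preferred, the same domination property confines the feasible set to the box $[\mathbf{0},{\boldsymbol\rho}^*]$, which is closed by continuity of ${\boldsymbol f}$ and hence compact, giving attainment directly.)

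I expect the only real obstacle to be the domination inequality ${\boldsymbol{\bar\rho}}\leq{\boldsymbol\rho}^*$; once the maximal-ratio scaling is in place, the needed strict inequality $f_m(\lambda{\boldsymbol\rho}^*) < \lambda\rho^*_m$ for $\lambda>1$ is a one-line consequence of concavity (Theorem~\ref{theo:concave}) and ${\boldsymbol f}(\mathbf{0})>0$, and everything else is bookkeeping.
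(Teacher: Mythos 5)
Your proof is correct, and while the convexity half coincides with the paper's (linear objective, $\rho_i - f_i({\boldsymbol\rho})$ convex by Theorem~\ref{theo:concave}), the optimality half takes a genuinely different route. The paper argues, as it did for \eqref{eq:optrho}, that strict monotonicity of ${\boldsymbol f}$ forces any optimum of \eqref{eq:maxrho} to satisfy \eqref{eq:maxrhocons} with equality, and then invokes uniqueness (Theorem~\ref{theo:uniqueness}) to identify it with ${\boldsymbol\rho}^*$; this is shorter but leaves attainment of the maximum implicit, since it only shows that an optimum, \emph{if one exists}, must be ${\boldsymbol\rho}^*$, without ruling out an unbounded feasible region. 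Your domination lemma --- every feasible $\bar{\boldsymbol\rho}$ satisfies $\bar{\boldsymbol\rho}\leq{\boldsymbol\rho}^*$ componentwise, proved by the maximal-ratio scaling argument mirroring Theorem~\ref{theo:uniqueness} (with $\lambda>1$ well defined because ${\boldsymbol\rho}^*={\boldsymbol f}({\boldsymbol\rho}^*)\geq{\boldsymbol f}(\mathbf{0})>0$, and the outward strict inequality $f_m(\lambda{\boldsymbol\rho}^*)<\lambda\rho^*_m$ coming from concavity plus $f_m(\mathbf{0})>0$) --- confines the feasible set to the box $[\mathbf{0},{\boldsymbol\rho}^*]$, so attainment, identification of ${\boldsymbol\rho}^*$ as the maximizer, and uniqueness of the optimum all follow at once; it even yields the slightly stronger fact that ${\boldsymbol\rho}^*$ maximizes every coordinate individually, not merely the sum. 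In short, your argument is a bit longer but more self-contained and closes a small gap the paper's terse proof glosses over.
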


\begin{proof}
Because ${\boldsymbol f}({\boldsymbol \rho})$ is concave (Theorem
\ref{theo:concave}), ${\boldsymbol \rho}-{\boldsymbol f}({\boldsymbol
\rho})$ is convex in ${\mathbb R}^n_+$.
Thus ${\boldsymbol \rho}-{\boldsymbol f}({\boldsymbol
\rho}) \leq 0$ is a convex set. The proof is complete by observing that,
similar to \eqref{eq:optrho}, optimum to \eqref{eq:maxrho} must satisfy
\eqref{eq:maxrhocons} with equality, and
${\boldsymbol \rho}^*$ is the unique solution to ${\boldsymbol \rho} =
{\boldsymbol f} ({\boldsymbol \rho})$.
\end{proof}

Following Corollary \ref{theo:maxrho}, any convex optimization solver
can be used to approach ${\boldsymbol \rho}^*$.  In network planning,
one will need to solve
\eqref{eq:rho} repeatedly to evaluate many candidate BS location and antenna
configurations. Typically, the performance evaluation does not have to
be exact in order to relate the quality of a candidate solution to
that of another. Utilizing the structure of ${\boldsymbol f}$, we can
numerically obtain upper bounds to ${\boldsymbol \rho}^*$ via linear
equations. Consider any ${\boldsymbol {\bar \rho}} \in {\mathbb
R}_+^n$.  Using the partial derivatives of $\boldsymbol f$ at
${\boldsymbol {\bar \rho}}$, and the point ${\boldsymbol
f}({\boldsymbol {\bar \rho}})$, we obtain an upper approximation of
$\boldsymbol f$ due to concavity. Formally, denote by ${\boldsymbol {\bar
h}}$ the linear function of \eqref{eq:linear} where ${\boldsymbol
{\hat \rho}} = {\boldsymbol {\bar \rho}}$ and $H_{ik}$, defined by
\eqref{eq:partial}, takes the following value,
$$
H_{ik} = \frac{\partial f_i}{\partial \rho_k}({\boldsymbol {\bar \rho}})
= \sum_{j \in \CJ_i} \ln(2) \frac{b_{ikj}}{a_j} \frac{1}{\ln^2\left(1+\frac{1}{
\sum_{h \in \CN \setminus \{i\}}  b_{ihj} {\bar \rho}_h + c_{ij}
}\right) \left(\sum_{h \in \CN \setminus \{i\}}  b_{ihj} {\bar \rho}_h + c_{ij}\right)^2
\left(1 + \frac{1}{
\sum_{h \in \CN \setminus \{i\}}  b_{ihj} {\bar \rho}_h + c_{ij}
}\right)}.
$$

The positive-valued solution to the linear system ${\boldsymbol \rho}
= {\boldsymbol {\bar h}}({\boldsymbol \rho})$, if it exists, is denoted
by ${\boldsymbol \rho}_{\boldsymbol {\bar h}}$. As established below,
${\boldsymbol {\bar h}}$ and ${\boldsymbol \rho}_{\boldsymbol {\bar h}}$
yield upper estimations of $\boldsymbol f$ and ${\boldsymbol \rho}^*$,
respectively.

\begin{corollary}
\label{theo:above}
${\boldsymbol {\bar h}}({\boldsymbol \rho}) \geq {\boldsymbol f}({\boldsymbol \rho}), {\boldsymbol \rho} \geq 0$.
\end{corollary}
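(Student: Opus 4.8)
The plan is to observe that $\boldsymbol{\bar h}$ is, by construction, the tangent hyperplane of $\boldsymbol f$ at the point $\boldsymbol{\bar\rho}$, and then to invoke the first-order characterization of concavity: a differentiable concave function lies everywhere below each of its tangent hyperplanes. Concretely, fix a cell $i$. By \eqref{eq:linear} with $\widehat{\boldsymbol\rho} = \boldsymbol{\bar\rho}$ and $H_{ik} = \frac{\partial f_i}{\partial \rho_k}(\boldsymbol{\bar\rho})$ (with $H_{ii}=0$, consistent with $f_i$ not depending on $\rho_i$), the $i$-th component is $\bar h_i(\boldsymbol\rho) = f_i(\boldsymbol{\bar\rho}) + \sum_{k \in \CN \setminus \{i\}} \frac{\partial f_i}{\partial \rho_k}(\boldsymbol{\bar\rho})\,(\rho_k - \bar\rho_k)$, which is exactly the first-order Taylor expansion of $f_i$ about $\boldsymbol{\bar\rho}$ in the $n-1$ variables on which $f_i$ actually depends.

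First I would record that, by Theorem \ref{theo:concave}, each $f_i$ is (strictly) concave and, by the third observation following \eqref{eq:rhoi}, continuously differentiable on the convex set $\mathbb{R}^{n-1}_+$. Then I would apply the standard inequality for differentiable concave functions, $f_i(\boldsymbol y) \le f_i(\boldsymbol x) + \nabla f_i(\boldsymbol x)^{T}(\boldsymbol y - \boldsymbol x)$ for all $\boldsymbol x,\boldsymbol y$ in the domain, taking $\boldsymbol x$ to be the restriction of $\boldsymbol{\bar\rho}$ and $\boldsymbol y$ the restriction of an arbitrary $\boldsymbol\rho \ge 0$; this yields precisely $f_i(\boldsymbol\rho) \le \bar h_i(\boldsymbol\rho)$. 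Since $i$ is arbitrary, the componentwise bounds assemble into the claimed vector inequality $\boldsymbol f(\boldsymbol\rho) \le \boldsymbol{\bar h}(\boldsymbol\rho)$ for every $\boldsymbol\rho \ge 0$. If one prefers not to cite the first-order concavity condition as a black box, the one-variable reduction $\phi(t) = f_i(\boldsymbol{\bar\rho} + t(\boldsymbol\rho - \boldsymbol{\bar\rho}))$ on $[0,1]$ does the job: $\phi$ is concave, $\phi'(0) = \nabla f_i(\boldsymbol{\bar\rho})^{T}(\boldsymbol\rho - \boldsymbol{\bar\rho})$, and $\phi(1) \le \phi(0) + \phi'(0)$ is the desired inequality.

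There is essentially no hard step here — this is the mirror image of Lemma \ref{theo:under}, but much easier, since $\boldsymbol{\bar h}$ is the exact tangent plane rather than a secant-type approximation, so concavity delivers the bound immediately without any of the case analysis needed there. The only point needing a word of care is the domain bookkeeping: $\bar h_i$ and $f_i$ depend on the $n-1$ coordinates other than $\rho_i$, the relevant domain $\mathbb{R}^{n-1}_+$ is convex so the segment joining the restrictions of $\boldsymbol{\bar\rho}$ and $\boldsymbol\rho$ stays inside it, and neither side of the inequality involves $\rho_i$, so the statement "for any $\boldsymbol\rho \ge 0$" is unambiguous.
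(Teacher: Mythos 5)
Your proposal is correct and is exactly the argument the paper intends: the paper's proof simply states that the result ``follows immediately from the concavity of ${\boldsymbol f}$ and the definition of ${\boldsymbol {\bar h}}$,'' and your write-up spells out that same tangent-plane/first-order-concavity argument in full detail. No gap, and no genuinely different route --- just a more explicit rendering of the paper's one-line proof.
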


\begin{proof}
Follows immediately from the concavity of ${\boldsymbol f}$
and the definition of ${\boldsymbol {\bar h}}$.
\end{proof}

\begin{theorem}
\label{theo:upperbound}
If $S_{{\boldsymbol \rho} = {\boldsymbol {\bar h}}({\boldsymbol \rho})} \not= \emptyset$,
then ${\boldsymbol \rho}_{\boldsymbol {\bar h}} \geq {\boldsymbol \rho}^*$.
\end{theorem}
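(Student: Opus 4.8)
The plan is to argue in two stages, because the hypothesis only posits feasibility of the \emph{linearized} system ${\boldsymbol \rho}={\boldsymbol {\bar h}}({\boldsymbol \rho})$ and not of the non-linear one: first I would deduce that ${\boldsymbol \rho}^*$ exists at all, and then establish ${\boldsymbol \rho}_{\boldsymbol {\bar h}}\geq{\boldsymbol \rho}^*$ by a ratio-scaling argument in the spirit of the proof of Theorem~\ref{theo:uniqueness}.

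For the first stage: ${\boldsymbol \rho}_{\boldsymbol {\bar h}}$ is, by definition, a non-negative fixed point of ${\boldsymbol {\bar h}}$, so by Corollary~\ref{theo:above} it satisfies ${\boldsymbol \rho}_{\boldsymbol {\bar h}}={\boldsymbol {\bar h}}({\boldsymbol \rho}_{\boldsymbol {\bar h}})\geq{\boldsymbol f}({\boldsymbol \rho}_{\boldsymbol {\bar h}})$. Since ${\boldsymbol f}$ is strictly positive on $\mathbb{R}^n_+$ (recall ${\boldsymbol f}(0)>0$ for non-zero noise power and ${\boldsymbol f}$ is increasing), ${\boldsymbol f}({\boldsymbol \rho}_{\boldsymbol {\bar h}})>0$, hence ${\boldsymbol \rho}_{\boldsymbol {\bar h}}$ is a point with ${\boldsymbol \rho}_{\boldsymbol {\bar h}}\geq{\boldsymbol f}({\boldsymbol \rho}_{\boldsymbol {\bar h}})>0$. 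Corollary~\ref{theo:opttoexistence} then gives $S_{{\boldsymbol \rho}={\boldsymbol f}({\boldsymbol \rho})}\neq\emptyset$, and by Theorem~\ref{theo:uniqueness} the fixed point ${\boldsymbol \rho}^*$ is well defined.

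For the second stage I would first record the structure of ${\boldsymbol {\bar h}}$: it is affine, $\bar h_i({\boldsymbol \rho})=\sum_{k\in\CN\setminus\{i\}}H_{ik}\rho_k+v_i$ with $H_{ik}>0$ and intercepts $v_i:=\bar h_i(0)\geq f_i(0)>0$ (the inequality again by Corollary~\ref{theo:above}); consequently each $\bar h_i$ is strictly increasing in every coordinate, and $({\boldsymbol \rho}_{\boldsymbol {\bar h}})_i=\bar h_i({\boldsymbol \rho}_{\boldsymbol {\bar h}})\geq v_i>0$ for all $i$. Also ${\boldsymbol \rho}^*={\boldsymbol f}({\boldsymbol \rho}^*)\leq{\boldsymbol {\bar h}}({\boldsymbol \rho}^*)$ by Corollary~\ref{theo:above}. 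Now put $\lambda=\max_{i\in\CN}\rho^*_i/({\boldsymbol \rho}_{\boldsymbol {\bar h}})_i$, which is well defined since ${\boldsymbol \rho}_{\boldsymbol {\bar h}}>0$, and let $m$ attain the maximum. Suppose, for contradiction, that $\lambda>1$. Then ${\boldsymbol \rho}^*\leq\lambda{\boldsymbol \rho}_{\boldsymbol {\bar h}}$ componentwise, while, using affinity of $\bar h_m$ together with $v_m>0$ and $\lambda>1$,
\[
\bar h_m(\lambda{\boldsymbol \rho}_{\boldsymbol {\bar h}})=\lambda\,\bar h_m({\boldsymbol \rho}_{\boldsymbol {\bar h}})-(\lambda-1)v_m=\lambda({\boldsymbol \rho}_{\boldsymbol {\bar h}})_m-(\lambda-1)v_m<\lambda({\boldsymbol \rho}_{\boldsymbol {\bar h}})_m=\rho^*_m.
\]
Monotonicity of $\bar h_m$ and then Corollary~\ref{theo:above} give $f_m({\boldsymbol \rho}^*)\leq\bar h_m({\boldsymbol \rho}^*)\leq\bar h_m(\lambda{\boldsymbol \rho}_{\boldsymbol {\bar h}})<\rho^*_m$, contradicting $f_m({\boldsymbol \rho}^*)=\rho^*_m$. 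Hence $\lambda\leq1$, i.e.\ ${\boldsymbol \rho}^*\leq{\boldsymbol \rho}_{\boldsymbol {\bar h}}$, which is the claim.

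The point that needs care — and the one I expect to be the crux — is the strict positivity of the intercepts $v_i$ of the linearization: this is exactly what makes $\bar h_m$ fall strictly below the diagonal when its fixed point is scaled up past unity, and it is precisely where concavity enters, via Corollary~\ref{theo:above}, together with ${\boldsymbol f}(0)>0$; the rest is routine monotonicity bookkeeping. An alternative packaging, closer to the proof of Theorem~\ref{theo:necessary}, is to show via a linear program that ${\boldsymbol \rho}_{\boldsymbol {\bar h}}$ is the componentwise \emph{maximum} of $\{{\boldsymbol \rho}\geq0:{\boldsymbol \rho}\leq{\boldsymbol {\bar h}}({\boldsymbol \rho})\}$ and then invoke ${\boldsymbol \rho}^*\leq{\boldsymbol {\bar h}}({\boldsymbol \rho}^*)$; but the scaling argument above is self-contained and avoids appealing to non-negativity of $({\boldsymbol I}-{\boldsymbol H})^{-1}$.
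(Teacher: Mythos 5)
Your proof is correct, but it follows a genuinely different route from the paper. The paper argues via linear programming: it considers $\max\{\sum_{i\in\CN}\rho_i : {\boldsymbol \rho}\leq{\boldsymbol{\bar h}}({\boldsymbol \rho}),\ {\boldsymbol \rho}\in\mathbb{R}^n_+\}$, notes (as for \eqref{eq:maxrho}) that this LP, if feasible, has a unique optimum attaining ${\boldsymbol \rho}={\boldsymbol{\bar h}}({\boldsymbol \rho})$, hence equal to ${\boldsymbol \rho}_{\boldsymbol{\bar h}}$, observes via Corollary~\ref{theo:above} that ${\boldsymbol \rho}^*$ is LP-feasible, and concludes ${\boldsymbol \rho}_{\boldsymbol{\bar h}}\geq{\boldsymbol \rho}^*$ because the LP stays feasible with the added constraint ${\boldsymbol \rho}\geq{\boldsymbol \rho}^*$. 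You instead run a ratio-scaling argument in the style of the uniqueness proof (Theorem~\ref{theo:uniqueness}) directly on the affine majorant, with the key quantitative ingredient being strict positivity of the intercepts $\bar h_i(0)\geq f_i(0)>0$ (again from Corollary~\ref{theo:above}), which makes $\bar h_m(\lambda{\boldsymbol \rho}_{\boldsymbol{\bar h}})<\lambda({\boldsymbol \rho}_{\boldsymbol{\bar h}})_m$ for $\lambda>1$ and yields the contradiction; this is self-contained, needs no optimization machinery (no Weierstrass/boundedness step hidden in the LP claim, no appeal to properties of $({\boldsymbol I}-{\boldsymbol H})^{-1}$), only monotonicity, affinity and Corollary~\ref{theo:above}. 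You also add a preliminary stage deriving existence of ${\boldsymbol \rho}^*$ from the hypothesis alone, via ${\boldsymbol \rho}_{\boldsymbol{\bar h}}\geq{\boldsymbol f}({\boldsymbol \rho}_{\boldsymbol{\bar h}})>0$ and Corollary~\ref{theo:opttoexistence}; the paper does not do this because the whole section is stated under the standing assumption $S_{{\boldsymbol \rho}={\boldsymbol f}({\boldsymbol \rho})}\neq\emptyset$, so your version is in fact a slight strengthening. Your closing remark correctly identifies the LP packaging as essentially the paper's proof.
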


\begin{proof}
Consider the linear programming (LP) formulation $\max \{\sum_{i \in
\CN} \rho_i : {\boldsymbol \rho} \leq {\boldsymbol {\bar
h}}({\boldsymbol \rho}), {\boldsymbol \rho} \in {\mathbb
R}_+^n\}$. Similar to \eqref{eq:maxrho}, it is easily realized that
the LP formulation, if feasible, has a unique optimum satisfying
${\boldsymbol \rho} = {\boldsymbol {\bar h}}({\boldsymbol
\rho})$. Hence the unique optimum is ${\boldsymbol \rho}_{\boldsymbol
{\bar h}}$. Moreover, ${\boldsymbol \rho}^*$ is, by Corollary
\ref{theo:above}, a feasible point to the LP, and hence the LP
remains feasible after including ${\boldsymbol \rho} \geq {\boldsymbol
\rho}^*$, and the result follows.
\end{proof}

The process of solving the load-coupling system, e.g., an interior
point method for \eqref{eq:maxrho}, will typically generate a sequence
of iterations approaching ${\boldsymbol \rho}^*$ from below. By Theorem
\ref{theo:upperbound}, the iterations can be used to compute upper bounds,
thus yielding an interval confining ${\boldsymbol \rho}^*$.
In order to speed up the process of network optimization,
performance evaluation of a candidate planning solution can
use a threshold of the maximum size of the interval, instead of
computing the exact solution of the load vector.

Computing an upper bound ${\boldsymbol
\rho}_{\boldsymbol {\bar h}}$, involves solving a system of $n$ linear equations.
The same amount of computation applies to the feasibility check and
computing lower bounding ${\boldsymbol \rho}_{\boldsymbol h}^0$ in Section~\ref{sec:linear}.
It is straightforward to see that calculating the coefficients is of
complexity $\mathcal{O}(n^2)$. Thus the overall complexity lies in the
matrix inversion operation that runs in the time range
$\mathcal{O}(n^{2.3727})$ and $\mathcal{O}(n^3)$, where the former is
attainable only asymptotically by the Coppersmith--Winograd algorithm.

\section{Numerical Results}
\label{sec:experiment}

In this section, we numerically investigate the theoretical findings
in the previous sections. An illustrative simulation study has been
conducted for a three-site 3GPP LTE network with an inter-site
distance of 500~m, adopting a wrap-around technique. The simulated
system operates at 2~GHz with 10~MHz bandwidth. Each site is equipped
with a three-sector downtilted directional antenna with 14~dBi antenna
gain. The propagation environment and user distribution follow the
3GPP specification in \cite{3gpp36814}, assuming propagation model 1
(Okumura-Hata, urban, 8~dB standard deviation shadow fading) and user
generation scenario 4b with one hotspot of 40~m radius per macro cell
area.  Note that, as for any system model, the complete assessment of
the model validity would also include validating of numerical results
against results from real deployments, which is beyond the scope of
the current paper.

The network layout we have used is illustrated in
Figure~\ref{fig:experiment_configs}. Two layers of users are
generated, with 30 users per macro cell area in total, out of which
2/3 (the dot markers) is in a randomly placed hotspot, and 1/3 (the
x-markers) are distributed randomly and uniformly over the area. Each
user equipment has an omni-directional antenna with 0~dBi antenna
gain. The traffic demand corresponds to 400~kbit/s for all users
within a duration of one second in the time domain.


Two network configurations are illustrated in
Figure~\ref{fig:experiment_configs}, with the only difference being the
antenna direction of cell~1, which impacts the sets of users served by
the cell and its neighbors. Intuitively, configuration two
is inferior, since it results in that the hotspot users in cell~1's
original coverage area (see Figure~\ref{fig:experiment_config1}) is to
be served by cell~8 and/or cell~9, although these users are relatively
far away from the two cells.
The likely impact is poorer link
quality for the users in the handed-over hotspot as well as increased
number of users to be served by the neighbors of cell~1. These effects
are expected to be seen in the load of the neighbor cells.

First, for both configurations, the existence of system solutions has
been verified by finding $\boldsymbol\rho^0_{h}$ to the corresponding
linear system, as described in Section~\ref{sec:linear}. Next, the
non-linear coupling system \eqref{eq:rho} is solved using the
non-linear optimization toolbox of MATLAB.  Both
$\boldsymbol\rho^0_{h}$ and $\boldsymbol \rho^*$ are shown in
Figure~\ref{fig:experiment_load_hists}. As expected, the load of cells~8
and 9 increases for the second configuration. At the same time, the
load of cell~1 does not decrease either, even though it serves fewer
users under the second configuration. This is due to a joint effect of
several factors. Firstly, as can be seen from
Figure~\ref{fig:experiment_config2}, users served by cell~1 are likely
to experience high interference from cell~6 and vice versa.  From
Figure~\ref{fig:experiment_load_hist2}, we observe that the load of
cell~6 has also slightly increased. Secondly, the increased load in
cells~8 and 9 implies more frequent transmissions in these cells and
thus higher probability of interference to other cells; this in turn
increases the load of the other cells, which can also be clearly seen
in Figure~\ref{fig:experiment_load_hists}. We further note that the
solution $\boldsymbol \rho^*$ and thus the second configuration are
not feasible from the capacity point of view.


In Figure~\ref{fig:experiment_load_hists}, we also illustrate the load
solutions to the linear systems described in
Section~\ref{sec:optimization}, i.e., the upper bound
$\boldsymbol\rho_{{\boldsymbol {\bar h}}}$ assuming
$\boldsymbol{\widehat \rho} =
\boldsymbol \rho^{0}_{{\boldsymbol h}}$. Table~\ref{tab:bounds} provides further details
of the quality of both the lower and upper bounds obtained for all 9
cells in Configuration~1 for which the solution is illustrated in
Figure~\ref{fig:experiment_config1}.

The tight upper bound indicates the efficiency of the linear
approximation described in Section~\ref{sec:optimization}.  In
average, the estimation is deviates only a few percent from the true
load value. For $\boldsymbol{\rho^0_h}$ the values are significantly
lower than ${\boldsymbol \rho}^*$, as ${\boldsymbol \rho} =
{\boldsymbol h}^0({\boldsymbol \rho})$ represents a very optimistic
view of load coupling.  The observation sheds further light on the
importance of fundamental characterization that the two systems are
completely equivalent in solution existence, despite the large
difference in numerical values of $\boldsymbol{\rho^0_h}$ and
${\boldsymbol \rho}^*$. Improving the lower bounds,
although being beyond the scope of the current paper, is an
interesting topic for future investigation. For example, in
\cite{SiYu12}, the model discussed in this paper is applied for load
balancing, for which very tight lower and upper bounds are obtained
using few fixed-point iterations. It should be further noted that
although the results have been presented for downlink, the model and
the theoretical findings can be also applied to uplink.


For the two network configurations, in Figure~\ref{fig:feasibility} we
illustrate the behavior of the cell-load coupling with respect to
demand, which is successively scaled up uniformly over the service
area.  Figure~\ref{fig:feasibility_nonlinear} and
Figure~\ref{fig:feasibility_linear} show, respectively, the results for
the nonlinear load-coupling system
\eqref{eq:rho}, and the linear equation system ${\boldsymbol \rho} =
{\boldsymbol h}^0({\boldsymbol \rho})$ that provides lower estimation
and characterizes feasibility of \eqref{eq:rho}. The two
configurations are distinguished by using respectively solid and
dotted curves for the load solutions of nine cells. In Figure~\ref{fig:feasibility_nonlinear}, the thicker
curves represent the load of cell 8. For the linear system, only the
maximum value among the cells is shown in
Figure~\ref{fig:feasibility_linear} for the sake of clarity.

From Figure~\ref{fig:feasibility_nonlinear}, it is apparent that the
solution values of \eqref{eq:rho} grow rapidly in the high-demand
region, and becomes infeasible beyond some point. The feasibility
boundaries for the two configurations are shown by the vertical
lines. Configuration one is clearly superior, as its load values,
shown by the solid curves, are below those of configuration two, and
the feasibility boundary is considerably higher. For configuration
two, cell 8 has the highest load (the dotted thick line). Using
configuration one, some of the users are served by cell 1 instead (see
Figure~\ref{fig:experiment_configs}), leading to lower load in cell 8
(the solid thick line).  Note that, for both configurations, when
getting somewhat close to the infeasible region, the solver gives
solutions containing some zero elements, indicating the solution is
invalid (the system becomes unstable), before all the values abruptly drop to zeros showing
infeasibility.  When this occurs, the distance to the feasibility
boundary is, in fact, significant. The behavior shows the importance
of our analysis of characterizing feasibility exactly by the linear
system ${\boldsymbol
\rho} = {\boldsymbol h}^0({\boldsymbol \rho})$.

In Figure~\ref{fig:feasibility_linear}, the linear system gives
values growing consistently in demand.  The system gives the feasibility
boundary point of the non-linear load coupling equations, when the
determinant of ${\boldsymbol I}-{\boldsymbol H}$, where $\boldsymbol
I$ is the identity matrix and $\boldsymbol H$ is the matrix defined
for ${\boldsymbol h}^0$, equals zero. After passing the boundary
point, the linear system returns negative (infeasible) solutions. Hence
the numerical results verify that the significance of the linear system
to identifying solution existence. Moreover, using the linear system, one is able
to conclude, as shown in Figure~\ref{fig:feasibility}, that
configuration one is clearly superior to configuration two.

\section{Conclusions}
\label{sec:conclusion}

We have provided a theoretical analysis of the LTE load coupling
system originally presented in \cite{SiFuFo09} and have derived its
fundamental properties, including concavity, behavior in limit, and
solution uniqueness. We have also formulated the necessary and
sufficient condition for solution existence, The analysis leads to a
simple means for determining feasibility.  In addition, we have
presented two linear approximations. The analysis has been supported
by theoretical proofs and numerical experiments and can serve as a
fundamental basis for developing radio network planning and
optimization strategies for LTE. Furthermore, the presented
linearizations and the bounding-based optimization can potentially be
used for more general convex optimization problems with similar
properties.

\section*{Acknowledgments}

The work of the second author is supported by the Swedish Foundation
of Strategic Research (SSF).  The authors would also like to thank the
reviewers for their comments and suggestions.

\newpage
\begin{figure}[ht!]
\centering
\subfigure[Feasible solution within network capacity: $0\leq{\boldsymbol \rho}\leq{1}$.]{
\includegraphics[scale=0.22]{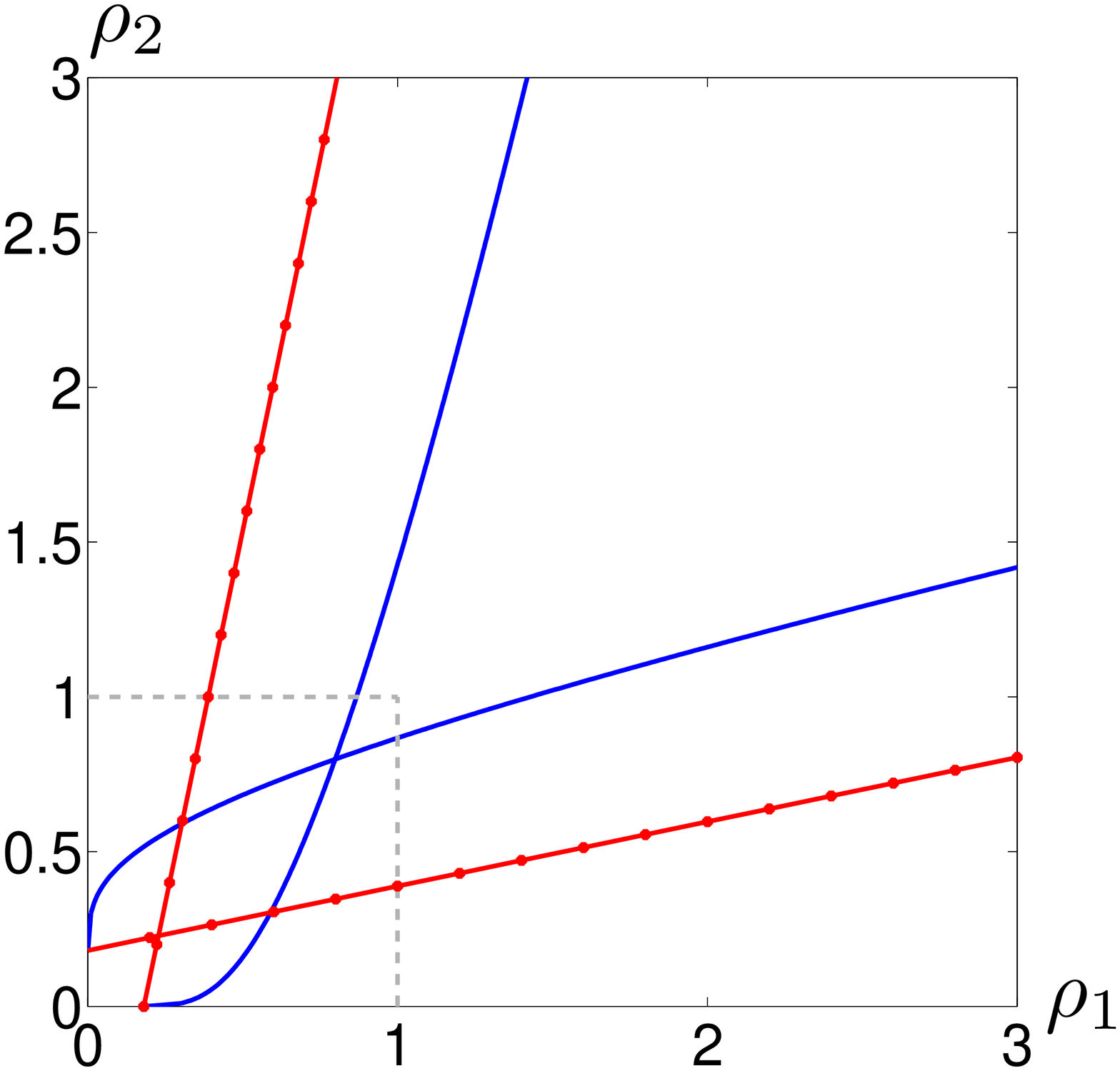}
\label{fig:case1}
}
\subfigure[Feasible solution beyond network capacity: $\exists{(i\in{\mathcal{N}})}|\rho_{i}>1$.]{
\includegraphics[scale=0.22]{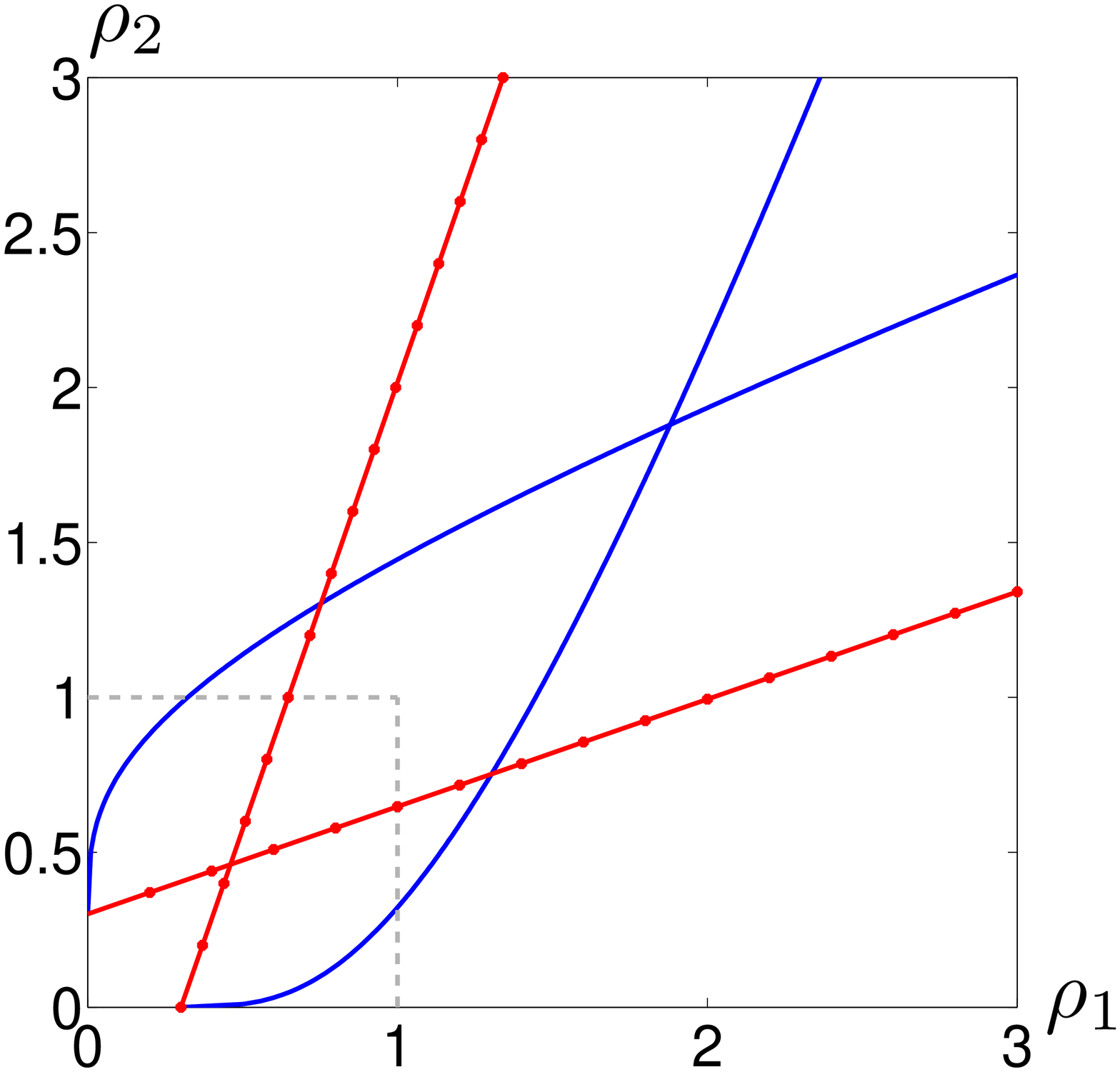}
\label{fig:case2}
}
\subfigure[Infeasible system: $S_{{\boldsymbol \rho} =  {\boldsymbol f}({\boldsymbol \rho})} = \emptyset$]{
\includegraphics[scale=0.22]{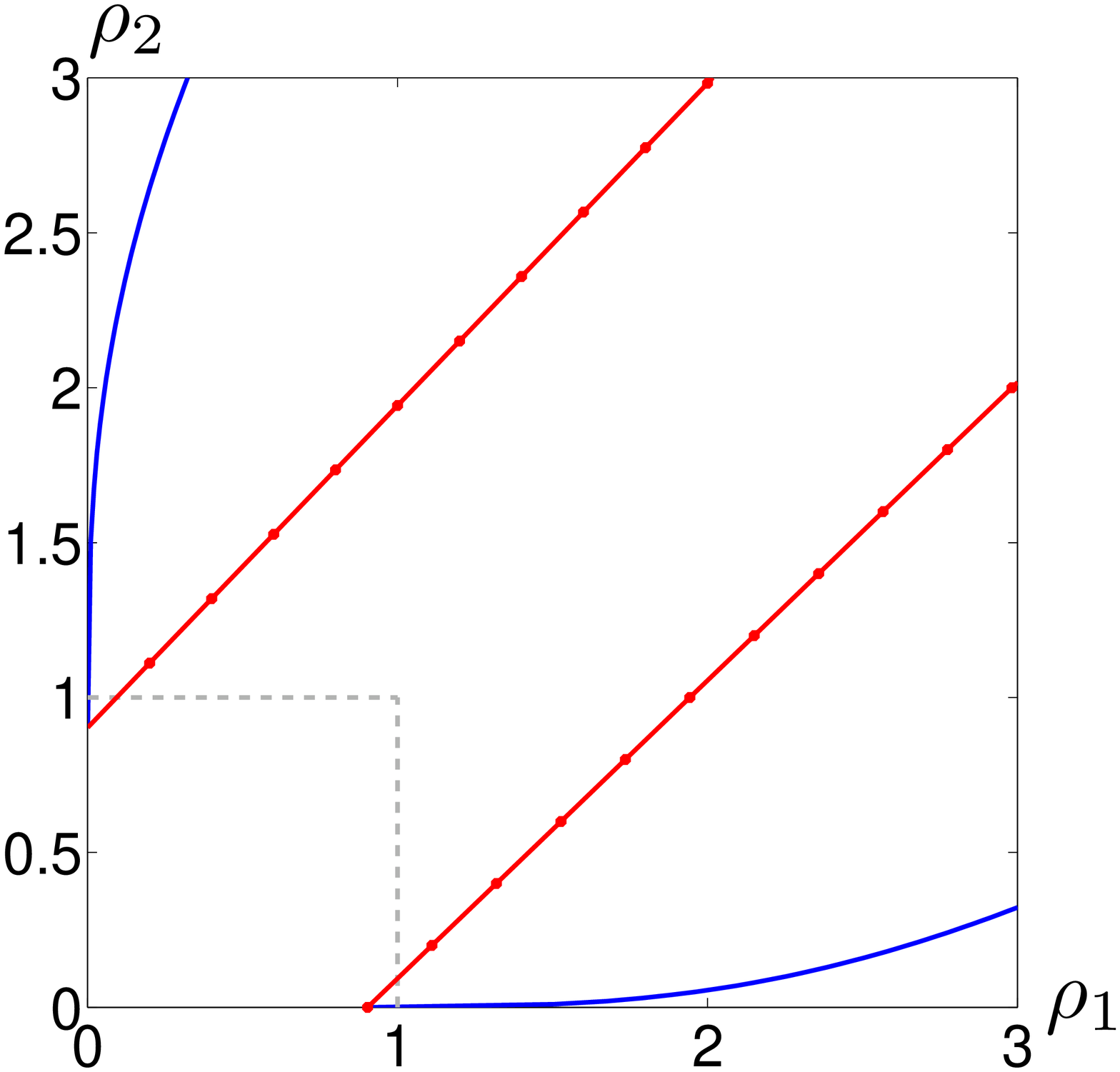}
\label{fig:case3}
}
\caption[]{An illustration of the load-coupling system of two cells.}
\label{fig:twocell}
\end{figure}

\begin{figure}[ht!]
\centering
\subfigure[Configuration 1.]{
\includegraphics[scale=0.44]{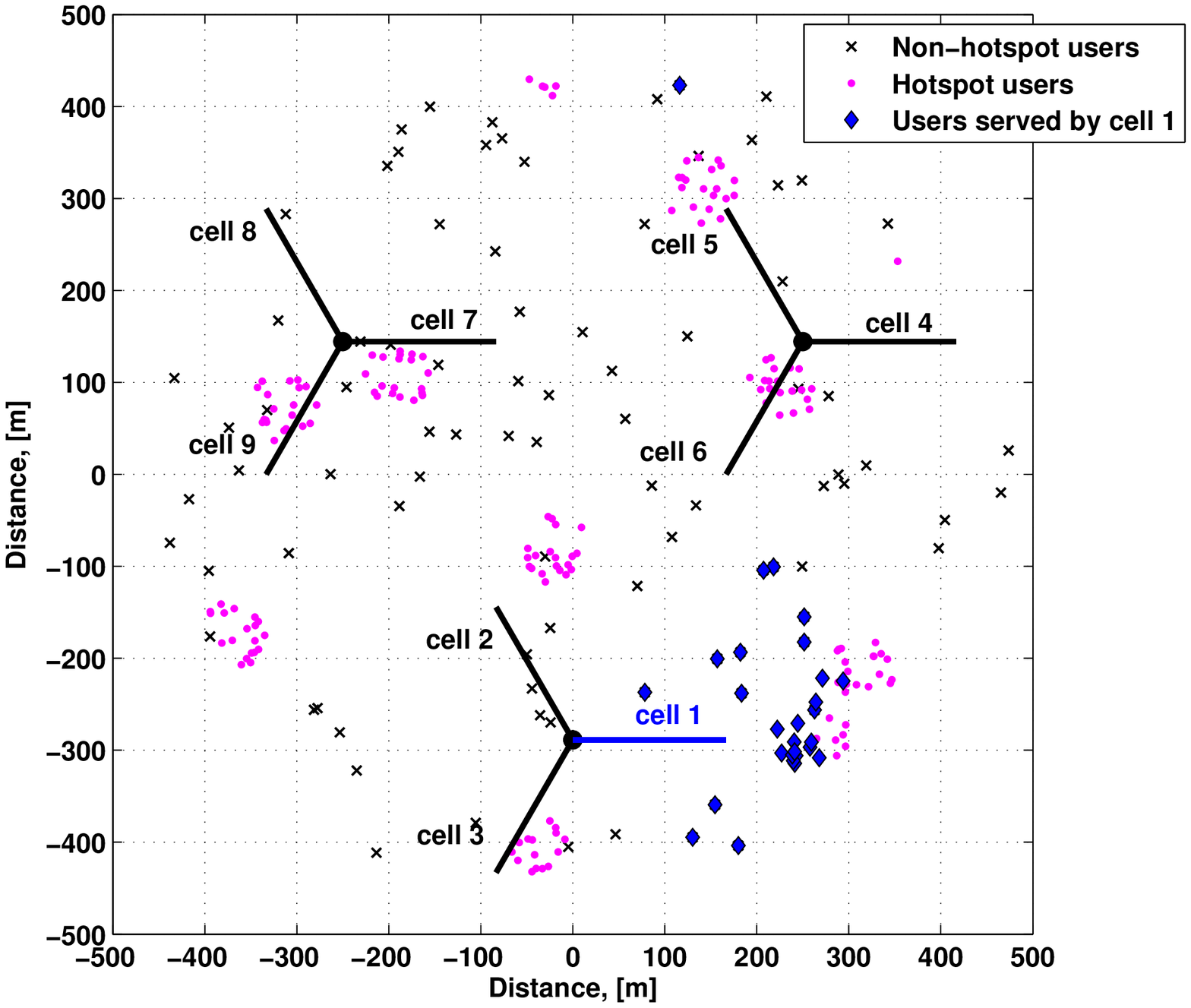}
\label{fig:experiment_config1}
}
\subfigure[Configuration 2.]{
\includegraphics[scale=0.44]{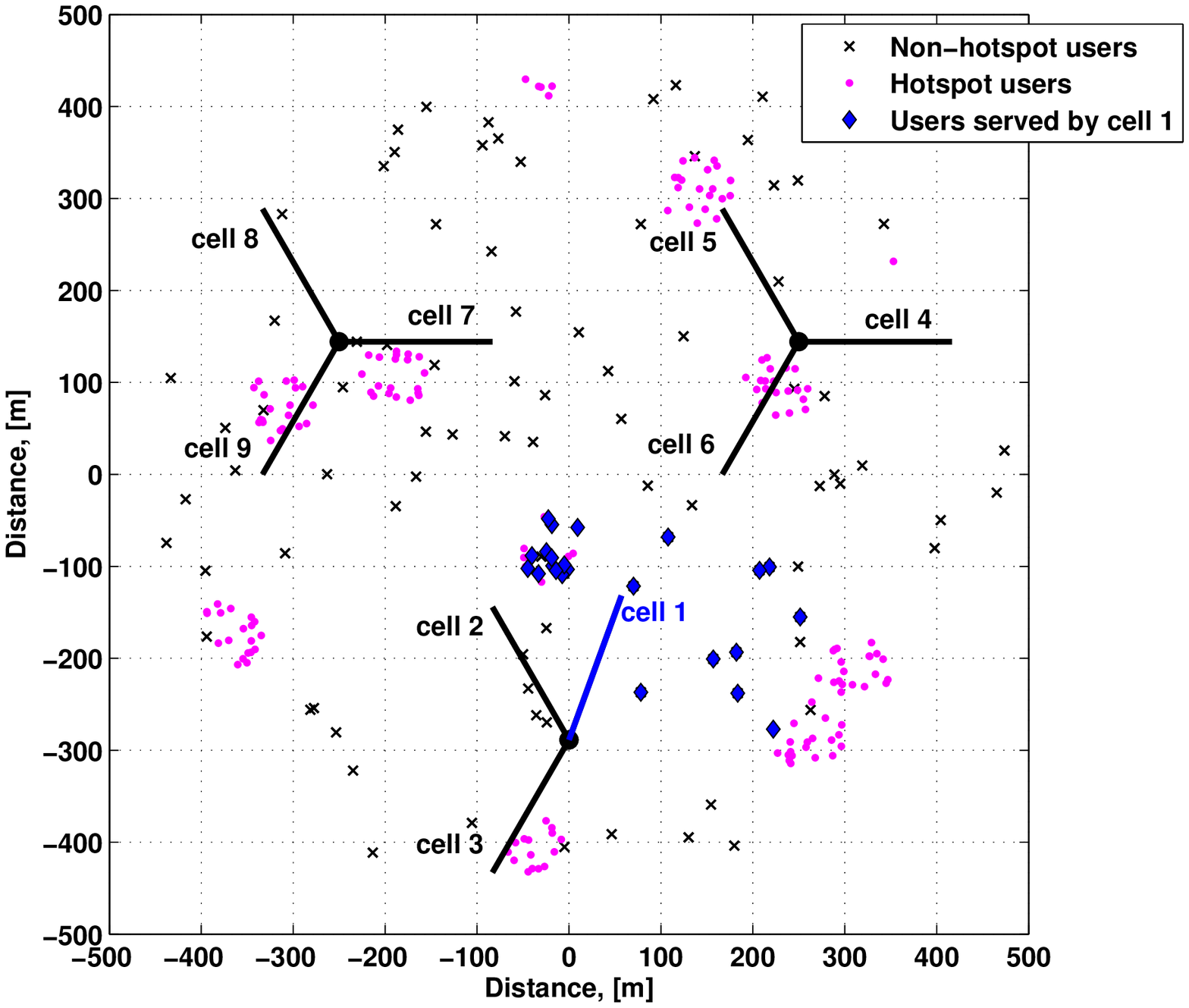}
\label{fig:experiment_config2}
}
\caption[]{Network configurations for numerical studies.}
\label{fig:experiment_configs}
\end{figure}

\begin{figure}[ht!]
\centering
\subfigure[Configuration 1: Feasible solution within network capacity.]{
\includegraphics[scale=0.43]{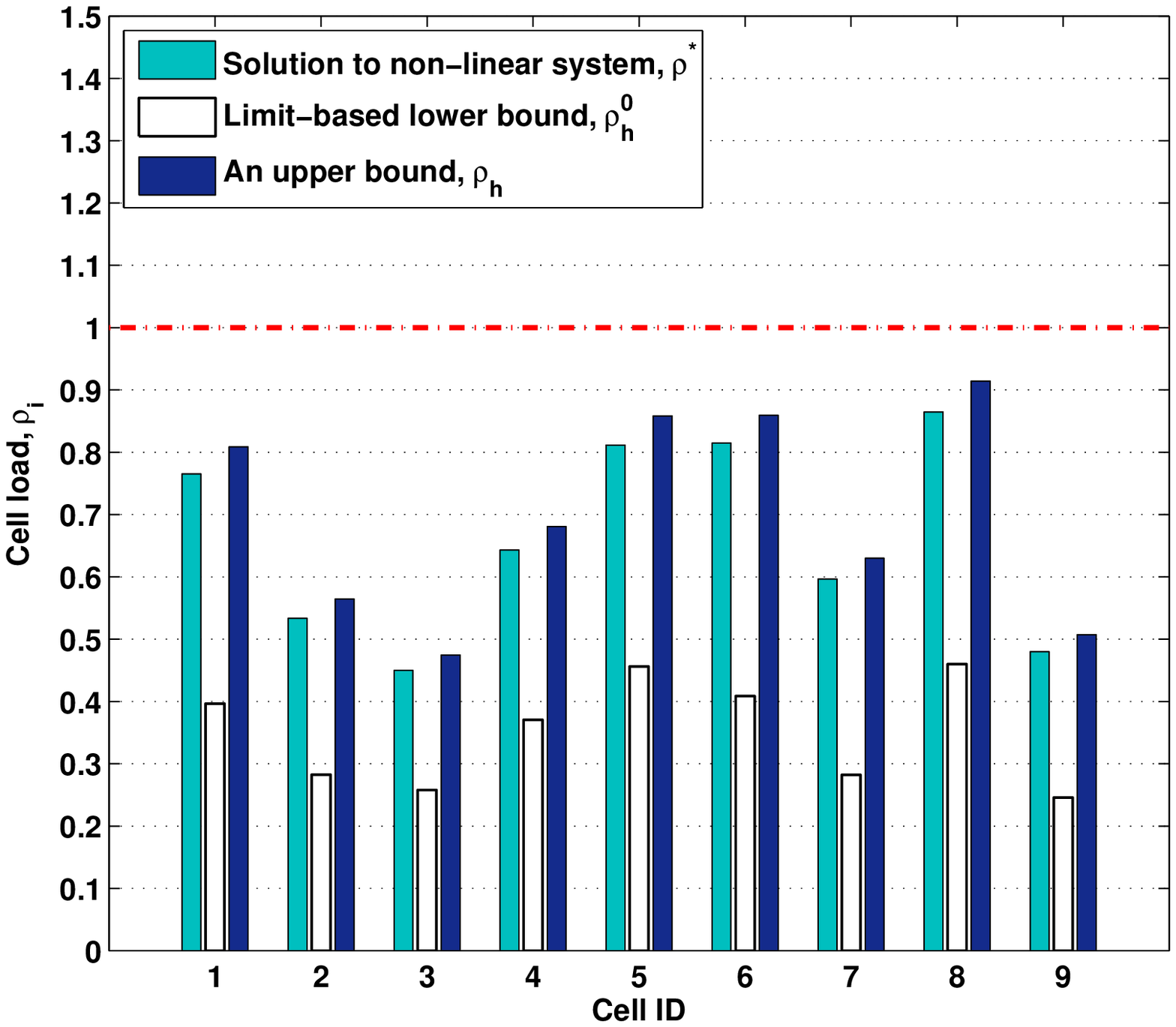}
\label{fig:experiment_load_hist1}
}
\subfigure[Configuration 2: Feasible solution beyond network capacity.]{
\includegraphics[scale=0.43]{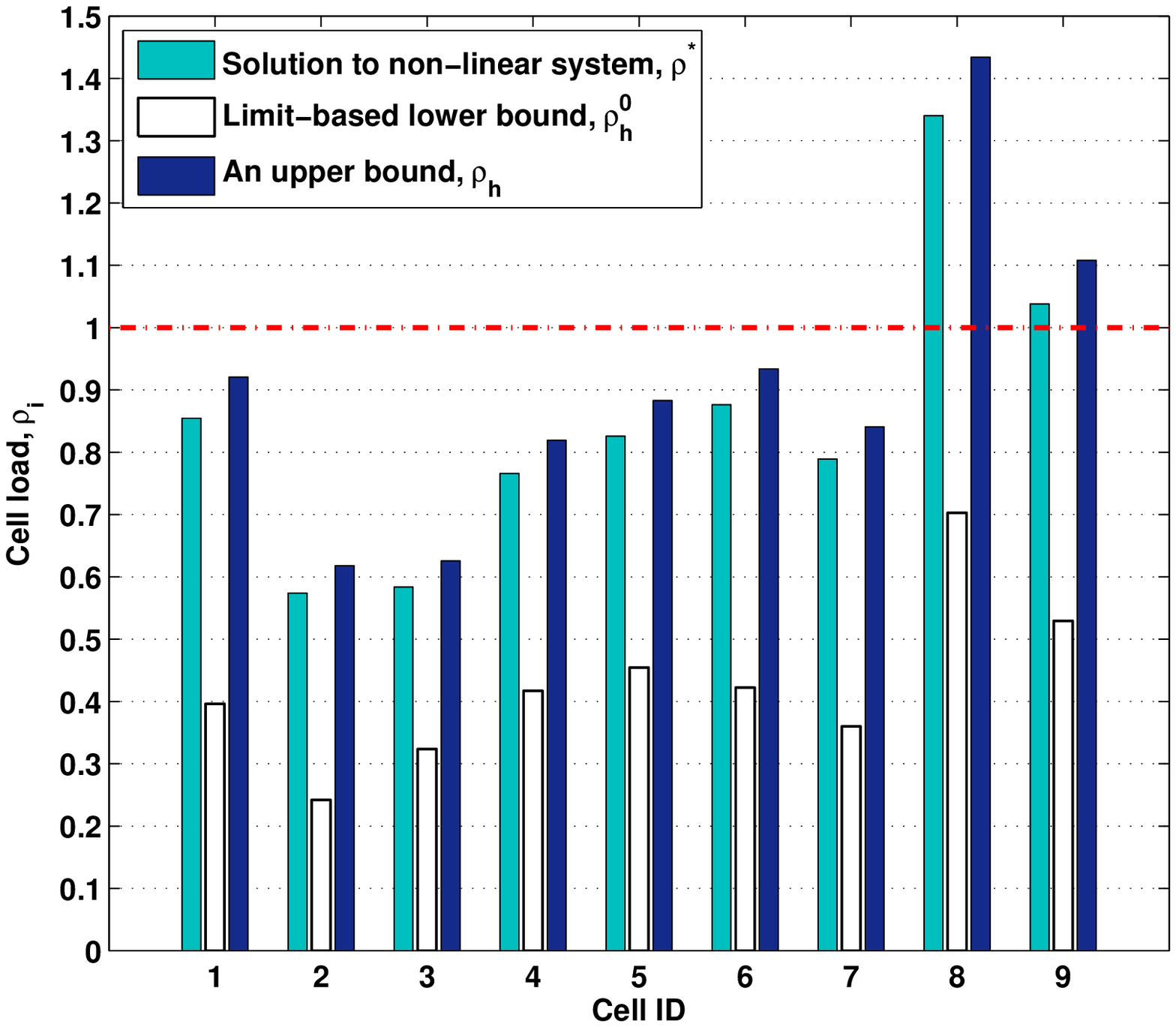}
\label{fig:experiment_load_hist2}
}
\caption[]{Load solutions for the two example network configurations.}
\label{fig:experiment_load_hists}
\end{figure}

\begin{table}[!ht]
\small
\caption{Lower and upper bounds quality for Configuration 1}
\label{tab:bounds}
\centering
\begin{tabular}{|c|c|c|c|c|c|c|c|c|c|}
\hline
Bound & Cell 1 & Cell 2 & Cell 3 & Cell 4 & Cell 5 & Cell 6 & Cell 7 & Cell 8 & Cell 9 \\
\hline
Upper bound, $\frac{|\boldsymbol{\rho^0_h}-\boldsymbol{\rho^*}|}{\boldsymbol{\rho^*}}\cdot{100\%}$ & 5.19 & 5.55 & 6.66 & 4.61 & 4.93 & 4.94 & 3.33 & 5.20 & 4.17\\
Lower bound, $\frac{|\boldsymbol{\rho_{\bar{h}}}-\boldsymbol{\rho^*}|}{\boldsymbol{\rho^*}}\cdot{100\%}$ & 48.05 & 48.15 & 44.44 & 43.08 & 43.21 & 50.62 & 51.66 & 46.82 & 50.00 \\
\hline
\end{tabular}
\end{table}

\begin{figure}[ht!]
\centering
\subfigure[Growth of cell load with respect to demand for the load coupling system \eqref{eq:rho}.]{
\includegraphics[scale=0.4]{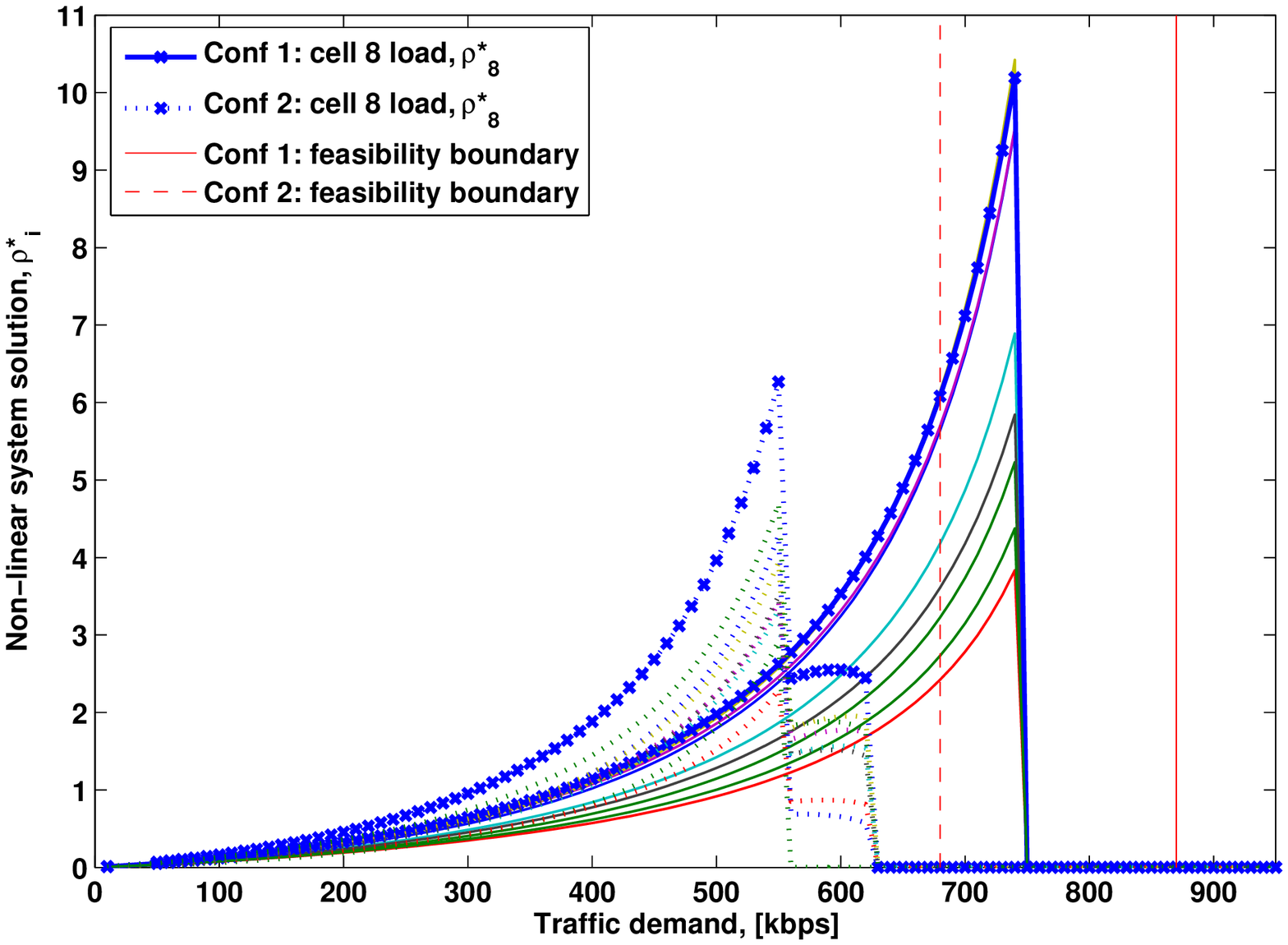}
\label{fig:feasibility_nonlinear}
}~~
\subfigure[Growth of cell load with respect to demand for the linear system
${\boldsymbol \rho} = {\boldsymbol h}^0({\boldsymbol \rho})$.]{
\includegraphics[scale=0.4]{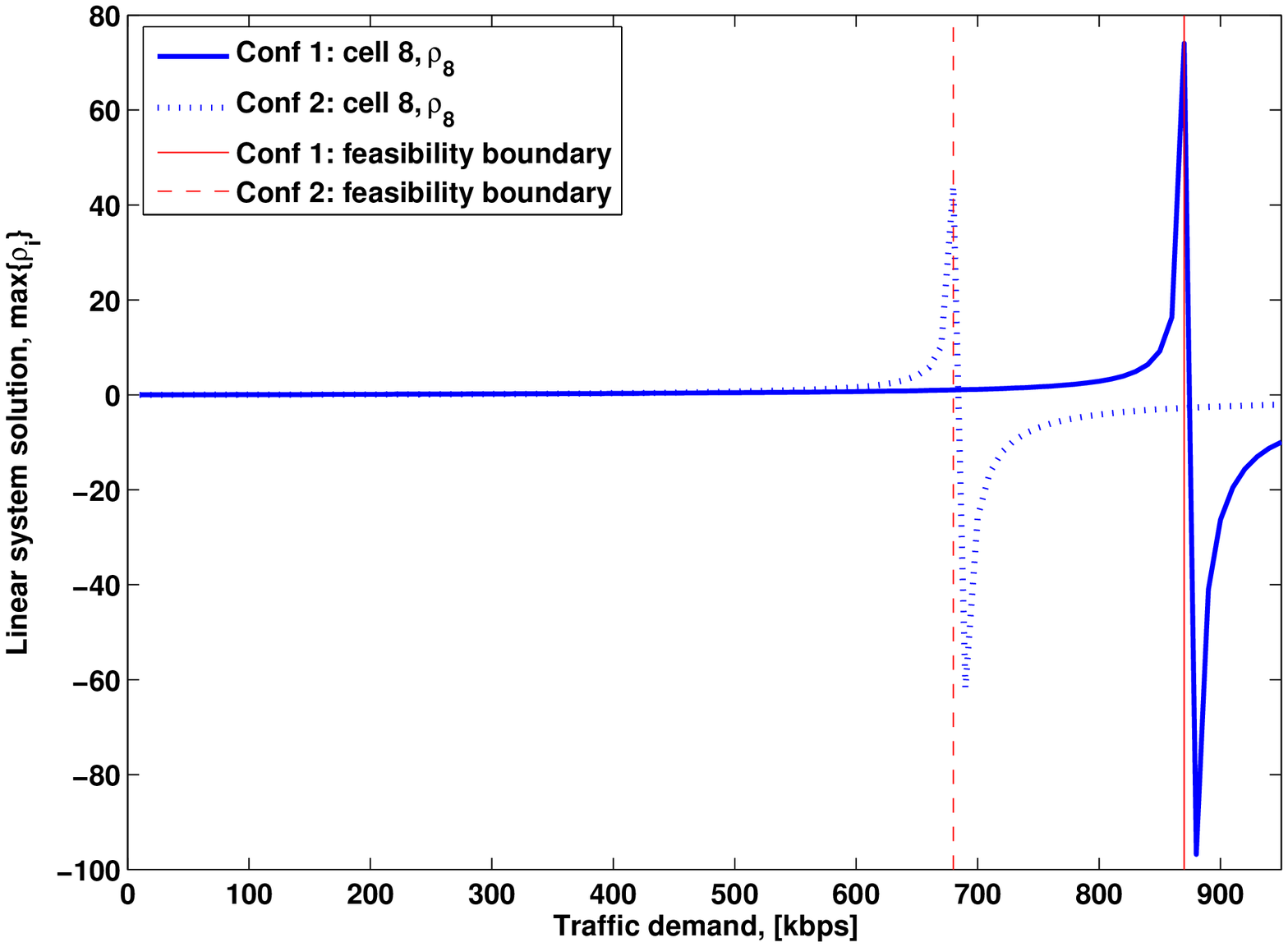}
\label{fig:feasibility_linear}
}
\caption[]{System solution with respect to traffic demand.}
\label{fig:feasibility}
\end{figure}

%








\end{document}